\newtheorem{thm}{Theorem}
\newtheorem{prop}{Proposition}
\newtheorem*{conj}{Conjecture}
\begin{document}

\begin{center}

\textbf{\LARGE{ON SOME SYMMETRY AXIOMS }}
\vspace{8pt}

\textbf{\LARGE{IN RELATIVITY THEORIES}}
\vspace{24pt}

\baselineskip 16pt
\large{Gergely Sz\'ekely}
\end{center}

\vspace{24pt}

\baselineskip 10pt

\noindent
\footnotesize{\textit{Address}: MTA Alfr\'ed R\'enyi Institute of Mathematics,\\ Re\'altanoda utca 13-15, H-1053, Budapest, Hungary.
\\E-mail: szekely.gergely@renyi.mta.hu.}

\vspace{24pt}

\baselineskip 14pt

\normalsize

\noindent
\textbf{Abstract}:
\textit{
  In this paper we review two symmetry axioms of special relativity and their connections to each other together with their role in some famous predictions of relativity theory, such as time dilation, length contraction, and the twin paradox. We also discuss briefly counterparts of these symmetry axioms in general relativity and formulate a conjecture, namely that without them the axioms of general relativity would capture general relativistic spacetimes only up to conformal equivalence.}

\vspace{12pt}
\noindent
\textbf{Keywords}: axiomatic method, relativity theory,  symmetry axioms, twin paradox, logical foundations.

\noindent
\textbf{PACS}: 03.30.+p,  04.20.-q

\noindent
\textbf{MSC}: 03B30, 83A02, 83C02,  

\baselineskip 14pt
\setlength{\parskip}{12pt}

\section{INTRODUCTION}

Special relativity is axiomatic in its spirit from the beginning since Einstein introduced special relativity as a theory of two informal postulates in his famous 1905 paper. One of these postulates was Galileo's principle of relativity (generalized to all physical phenomena), implying that there is no physical phenomenon that distinguishes some inertial observers; the other was the natural assumption that the speed of light is the same in every direction at least in one inertial frame of reference, see Einstein (1905, Sec.~2).  

Even in the non-axiomatic approaches to general relativity, a spacetime is given as a triple of a manifold $M$, a Lorentzian metric $g$, and an energy-momentum tensor $T$ satisfying Einstein's Field Equations and often some extra conditions, such as the causality or the energy conditions. So general relativity is also axiomatic in its spirit because in this context it is natural to think of $M$, $g$, and $T$ as basic concepts and of their basic properties (e.g., $M$ is being a manifold) together with Einstein's Field Equations and the extra conditions as axioms. 

Formal axiomatizations of relativity theories (both special and general) also have an extended literature, see, e.g., Ax (1978),  Benda (2008), Benda (2015), Goldblatt (1987), Guts (1982), Mundy (1986), Pambuccian (2007), Schutz (1981), Szab\'o (2009)  to mention only a few. One of these formal approaches is the one developed by the research team/school of  Hajnal Andr\'eka and Istv\'an N\'emeti. Here we will stay within the main framework of this team. However, comparing the axiomatizations of different frameworks is also an interesting and important research direction, see, e.g., Andr\'eka and N\'emeti (2014), Barrett and Halvorson (2015), Weatherall (2014), Rosenstock et.\ al.\ (2015). 

In this paper, after recalling the main axiomatic framework of the Andr\'eka--N\'emeti school, we review two of the numerous symmetry axioms appearing in this approach. We also review how these symmetry axioms are related to each other and their role in predictions of relativity theory, such as time dilation, length contraction, and the twin paradox. Finally, we discuss the counterparts of these symmetry axioms in general relativity and formulate a conjecture about what happens if we leave out these counterparts from the axiom system of general relativity, see Conjecture \ref{conj} on p.~\pageref{conj}. 

Apart from slight strengthening of some theorems resulting from a natural generalization of  axiom system $SpecRel_0$ and the simple observations of Propositions~\ref{prop-ineq}, \ref{thm-eq} and \ref{prop-c}, all the theorems and propositions of this paper can be found scattered in the cited references. The main aim of this paper is to investigate the roles and intuitive meanings of the two most often used symmetry axioms of the Andr\'eka--N\'emeti school in a way that can easily be understood even by nonspecialists. 

For methodological reasons, we work in a formal axiomatic framework, which among others is beneficial because it forces us to formulate unambiguous basic assumptions with a clear meaning. See Andr\'eka et.\ al.\ (2002, pp.~1245-1252) and Sz\'ekely (2009, Sec.\ 11.) for more methodological details why first-order logic is an ideal logic for developing axiomatic frameworks for relativity theories. 
A practical advantage of using first-order logic is that the theorems can be machine verified, see, e.g., Govindarajalulu et.\ al.\ (2015), and Stannett and N\'emeti (2014). See Friend (2015) for epistemological significance of the Andr\'eka--N\'emeti approach.

To make our formulas easy to read even to non-logicians, we try to use only simple and natural notations, e.g., the logical connective ``implies'' is denoted by $\Longrightarrow$, and logical connective ``and'' is denoted by comma.
Quantifiers ``for all'' and ``exists'' are denoted by the usual symbols   $\forall$ and $\exists$, respectively.      

\section{Axiomatic framework}

\subsection{Basic concepts}

First we fix a set of basic concepts about which we will formulate some basic assumptions (axioms).\footnote{We do not fix the basic concepts once and for all. We fix them just for creating a framework to formulate certain axioms and axiom systems. The axiomatic method also has the flexibility of modifying the basic concepts. Moreover, comparing axiom systems formulated using different basic concepts is an interesting research area, see, e.g., Andr\'eka and N\'emeti (2014).} Here we use the main framework of the Andr\'eka--N\'emeti school. That is, we have two sorts of basic concepts: \emph{Bodies} (things that move) $B$ and \emph{Quantities} (numbers that are used to coordinatize the moving bodies) $Q$. We have two kinds of distinguished bodies \emph{inertial observers} (or inertial coordinate systems) $IOb$ and \emph{light signals} (or photons) $Ph$. To put an algebraic structure on the quantities, we take the usual operations $+$, $\cdot$ and the ordering $<$ as basic concepts. Finally, we connect the physical sort (of bodies) and the mathematical sort (of quantities) by the worldview relation $W$. 

We use the worldview relation $W$ to express how observers associate coordinates to events (i.e., meetings of bodies). This is done by translating basic relation $W(o,b,x,y,z,t)$ to natural language as ``Observer $o$ coordinatizes body $b$ at space location $(x,y,z)$ at instant $t$.'' 

The two key concepts events and worldlines of relativity theory can easily be defined from $W$ as follows. The \emph{event} coordinatized by observer $o$ at spacetime location $\bar x=(x,y,z,t)$ is the set of those bodies that are coordinatized at $\bar x$ by $o$, i.e.,
\[
ev_o(\bar x) \stackrel{\textsf{\tiny def.}}{=} \{ b: W(o,b,\bar x) \}.
\] 
   
 The \emph{worldline} of body $b$ according to observer $o$ is the set of coordinate points where $b$ is coordinatized by $o$, i.e.,
\[
wline_o(b) \stackrel{\textsf{\tiny def.}}{=} \{ \bar x: W(o,b,\bar x) \}.
\] 
For any two coordinate points $\bar x,\bar y\in Q^4$, let us use the following notations for  the \emph{spatial distance} and the \emph{time difference}:
\[
dist(\bar x,\bar y) \stackrel{\textsf{\tiny def.}}{=} \sqrt{(x_1-y_1)^2+(x_2-y_2)^2+(x_3-y_3)^2} \text{ and } time(\bar x,\bar y) \stackrel{\textsf{\tiny def.}}{=} |x_4-y_4|.\enskip \footnote{By axiom $AxEField$ (see below) we have a strong enough algebraic structure on the quantities to define subtraction and square root.}
\]

The \emph{speed} of body $b$ according to observer $i$ can be defined as follows, 
\[v_i(b)=v \stackrel{\textsf{\tiny def.}}{\iff} (\exists \bar x,\bar y\in wline_i(b))\big[space_i(\bar x,\bar y) = v\cdot time(\bar x,\bar y)\big] \]
if $wline_i(b)$ is a subset of a line, e.g., if $b$ is a light signal, $i$ is an inertial observer, and axioms $AxPh$ and $AxEField$ (below) are assumed. 

The \emph{worldview transformation} between observer $o$ and $o'$ connects coordinate points $\bar x$ and $\bar x'$ iff the event coordinatized by $o$ at $\bar x$ is the same as the one coordinatized by $o'$ at $\bar x'$, i.e., 
\[
w_{oo'}(\bar x) = \bar x' \stackrel{\textsf{\tiny def.}}{\iff} ev_o(\bar x) = ev_{o'}(\bar x').  
\]
By the above definition worldview transformations are only binary relations but after assuming some axioms they become transformations, see, e.g., Theorem \ref{thm-apoi}.

\subsection{Axioms for special relativity}

Einstein's two original postulates immediately imply that the speed of light is the same for every inertial observer because this is true for one of the inertial observers by the second postulate, and there is no distinguished inertial observer by the first postulate. This property of light signals is basically the only nontrivial assumption we need to capture the kinematics of special relativity. 

\underline{$AxPh$}: For every inertial observer $i$, there is a finite speed $c_i$ such that all light signals move with speed $c_i$ according to $i$, and it is possible to send out a light signal with this speed $c_i$ in every direction everywhere, i.e.,\footnote{It is more natural to assume that the value $c$ is the same for all inertial observers in a separate axiom, see axioms $Ax(c_i=c_j)$ and $Ax(c=1)$ on p.~\pageref{p-c}.}
\begin{multline*}
(\forall i\in IOb)(\exists c_i\in Q)(\forall\bar x,\bar y\in Q^4)\Big[(\exists p\in Ph) \big[\bar x,\bar y\in wline_i(p)\big] \\ \iff dist(\bar x, \bar y) = c_i\cdot time(\bar x, \bar y)\Big].\;\footnotemark
\end{multline*}
Not just to prove their usual properties, but even to be able to define the concepts of spatial distance and time difference, we need some assumptions about the quantities. Therefore, we assume some algebraic properties of real numbers.\footnotetext{This formula literally says that for any inertial observer $i$ there is a quantity $c_i$ such that a potential light signal $p$ can be located in coordinate points $\bar x$ and $\bar y$ according to $i$ if and only if the corresponding speed is $c_i$.}

\underline{$AxEField$}: $(Q, +, \cdot, <)$ is an Euclidean field.\footnote{That is $(Q, +, \cdot)$ is a field, $(Q,<)$ is a linearly ordered set, $x<y\implies x+z< y+z$ and $0<x, 0< y \implies 0< x\cdot y$, and every positive number has a square root, i.e., $(\forall x>0)\exists y \big[x=y^2\big]$.    }

Also $AxPh$ has its intended meaning only if we assume that inertial observers coordinatize the same outside reality. Therefore, we also need the following assumption. 
 
\underline{$AxEv$}: Inertial observers coordinatize the same events, i.e., 
\[
(\forall i,i'\in IOb)(\forall\bar x\in Q^4)(\exists\bar x'\in Q^4)\big[ ev_i(\bar x) = ev_{i'}(\bar x')\big].
\]

Finally, to make it easier speaking about the motion of coordinate systems in the usual way by referring to the image of their time axes in other coordinate systems, we assume that the worldline of every inertial observer is the time axis in her/his own coordinate system. 

\underline{$AxSelf$}: Inertial observers are stationary in the origin of their own coordinate systems, i.e.,
\[
(\forall i \in IOb)\forall xyzt \big[(x,y,z,t) \in wline_i(i) \iff x=y=z=0\big].
\]

The four simple axioms above are enough to capture special relativity in a qualitative way. So let us introduce $SpecRel_0$ as the axiom system containing the four axioms above:
\[SpecRel_0 \stackrel{\textsf{\tiny def.}}{=}  AxPh +  AxEField  + AxEv + AxSelf.\]

\subsection{Consequences of $SpecRel_0$}

The first thing the reader might spot is that there is no axiom in $SpecRel_0$ stating that inertial observers move uniformly. This is so because this statement follows from the axioms of $SpecRel_0$:

\begin{thm}\footnote{By $AxSelf$ this theorem is a direct consequence of Theorem~\ref{thm-apoi} below.}${}$\\
\vspace{-24pt}
\begin{multline*}SpecRel_0 \implies (\forall i,j\in IOb)(\forall \bar x,\bar y,\bar z\in wline_i(j))\big[\bar x\neq \bar y \\\implies \exists \lambda [\lambda \cdot (\bar x -\bar y)= \bar z -\bar y] \big]
\end{multline*}
\end{thm}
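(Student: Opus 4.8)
The plan is to deduce the statement from the fact that the worldview transformation between any two inertial observers maps straight lines onto straight lines. Indeed, for $\bar x\in wline_i(j)$ we have $j\in ev_i(\bar x)$, so $AxSelf$ (applied to $j$) places the event $ev_i(\bar x)$ on $j$'s own worldline, i.e.\ on the time axis $\{(0,0,0,t):t\in Q\}$ of $j$'s coordinate system; conversely every point of that time axis corresponds to a point of $wline_i(j)$. Hence $wline_i(j)$ is the image of the time axis under the worldview transformation $w_{ji}$, and if $w_{ji}$ sends lines to lines then $wline_i(j)$ lies on a single line, which for $\bar x\neq\bar y$ is exactly the assertion that $\bar z-\bar y=\lambda(\bar x-\bar y)$ for some $\lambda$. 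So the task reduces to proving that worldview transformations are affine, hence line-preserving, bijections.

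First I would check that $w_{ij}$ really is a bijection of $Q^4$ rather than merely a relation. Totality and surjectivity come from $AxEv$: every event seen by one inertial observer is seen by the other. Single-valuedness and injectivity follow from $AxPh$ together with $AxEField$: the event at a coordinate point $\bar x$ records which photons pass through $\bar x$, and by $AxPh$ these photons trace out the light cone $\{\bar y:dist(\bar x,\bar y)=c_i\cdot time(\bar x,\bar y)\}$, whose apex recovers $\bar x$; thus distinct coordinate points carry distinct events.

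The central step is to show that $w_{ij}$ preserves the lightlike relation. Since photons are observer-independent bodies, if a photon is coordinatized by $i$ at both $\bar x$ and $\bar y$ then the same photon is coordinatized by $j$ at $w_{ij}(\bar x)$ and $w_{ij}(\bar y)$; the forward direction of $AxPh$ then forces $dist=c_j\cdot time$ between the images, while the converse direction of $AxPh$ (that light can be emitted in every direction everywhere) guarantees that every lightlike pair is actually witnessed by a photon. Hence $w_{ij}$ carries the $c_i$-light-cone field onto the $c_j$-light-cone field. The hard part is the Alexandrov--Zeeman-type conclusion that a bijection preserving the lightlike relation in both directions must be affine (up to the rescaling forced by a possible $c_i\neq c_j$). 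This is the genuine geometric core of the argument; it relies essentially on having at least two space dimensions, since in $1+1$ dimensions the two null directions may be scaled independently and affineness fails, and carrying it out over an arbitrary Euclidean field rather than over $\mathbb{R}$ is where the real effort is concentrated.

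Granting affineness, the proof closes immediately: $w_{ji}$ is a line-preserving bijection, the time axis is a line, so its image $wline_i(j)$ is contained in a line, giving the required $\lambda$. I expect the Alexandrov--Zeeman step to be the main obstacle; the reduction via $AxSelf$ together with the bijectivity and light-cone-preservation steps are comparatively routine.
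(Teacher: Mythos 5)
Your reduction is exactly the paper's own proof: the paper disposes of this theorem in a footnote, noting that by $AxSelf$ it is a direct consequence of Theorem~\ref{thm-apoi}, i.e., of the fact that worldview transformations between inertial observers are line-preserving bijections of $Q^4$; your opening and closing paragraphs reproduce precisely that reduction. The difference is that the paper simply cites Theorem~\ref{thm-apoi} (quoting it from Andr\'eka et.\ al.\ (2012b)), whereas you attempt to sketch its proof, correctly isolating the bijectivity of the worldview transformations and an Alexandrov--Zeeman-type step as the core.

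One step of your sketch is, however, false as stated, and you could not prove it: over an arbitrary Euclidean field $Q$, a bijection of $Q^4$ preserving the lightlike relation in both directions need \emph{not} be affine, not even up to rescaling. If $\phi$ is a nontrivial automorphism of $(Q,+,\cdot)$ --- such Euclidean fields exist, and every automorphism of a Euclidean field automatically preserves $<$ because positivity is definable as being a nonzero square --- then the map $(x,y,z,t)\mapsto(\phi(x),\phi(y),\phi(z),\phi(t))$ preserves lightlike separation in both directions but is not affine. This is exactly why Theorem~\ref{thm-apoi} is phrased ``up to changing the units of measurement \emph{and permuting all coordinates by a field automorphism}'': the true conclusion of the Alexandrov--Zeeman step over $Q$ is that the bijection is semi-affine, i.e., an affine map composed with such a coordinatewise automorphism. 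Fortunately the damage is local: since $\phi(\bar a+t\,\bar b)=\phi(\bar a)+\phi(t)\,\phi(\bar b)$, coordinatewise automorphisms are collineations, so semi-affine bijections still carry lines onto lines, and your final step --- $wline_i(j)$ is the $w_{ji}$-image of $j$'s time axis, hence contained in a line, giving the required $\lambda$ --- goes through verbatim. With the key lemma corrected from ``affine'' to ``semi-affine,'' your argument is sound and coincides with the proof in the literature on which the paper's citation rests.
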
 

We made no explicit restriction on the speed of the inertial observers, but $SpecRel_0$ implies that they cannot move faster than the speed of light with respect to one another. 

\begin{thm}\label{thm-noftl}
$SpecRel_0 \implies (\forall i,j\in IOb)(\forall p\in Ph) \big[v_i(j)<v_i(p)\big]$ 
\end{thm}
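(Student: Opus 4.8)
The plan is to reduce the statement to the intuitive geometric fact that an inertial observer's own time axis, which is manifestly ``timelike'' in her own coordinate system, must stay inside the light cone when viewed by any other observer. First I would use $AxPh$ to evaluate the right-hand side: for any photon $p$ and any two points $\bar x,\bar y\in wline_i(p)$ the biconditional of $AxPh$ gives $dist(\bar x,\bar y)=c_i\cdot time(\bar x,\bar y)$, so $v_i(p)=c_i$. Here the ``every direction everywhere'' clause of $AxPh$ is what guarantees that $c_i$ is a genuine positive speed, $c_i>0$; without positivity the statement already fails for $i=j$, where it reduces to $v_i(i)=0<c_i$. By the preceding theorem $wline_i(j)$ is contained in a line, so $v_i(j)$ is well defined, and the goal becomes the single inequality $v_i(j)<c_i$.

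The main tool I would set up is a light-cone preservation lemma: if $ev_i(\bar x)=ev_j(\bar x')$ and $ev_i(\bar y)=ev_j(\bar y')$, then $dist(\bar x,\bar y)=c_i\cdot time(\bar x,\bar y)$ holds exactly when $dist(\bar x',\bar y')=c_j\cdot time(\bar x',\bar y')$ holds. This follows by reading $AxPh$ in both coordinate systems: each side says that some photon passes through the two points in the respective worldview, and a photon lies in $ev_i(\bar x)\cap ev_i(\bar y)$ precisely when it lies in $ev_j(\bar x')\cap ev_j(\bar y')$, while $AxEv$ guarantees that the matching coordinate points exist. By $AxSelf$ the worldline of $j$ in its own coordinates is exactly the time axis $\{(0,0,0,t):t\in Q\}$, and the image of any of its points under the worldview relation lands on $wline_i(j)$. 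Thus two events $E_1,E_2$ on $j$'s worldline have $j$-coordinates $(0,0,0,t_1)$ and $(0,0,0,t_2)$, for which $dist=0<c_j\cdot time$ (using $c_j>0$); the task is to transfer this strict ``inside the cone'' information to the $i$-coordinates $\bar x_1,\bar x_2$.

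To carry out the transfer I would emit auxiliary photons. Using $AxPh$ in $j$'s frame I choose an event $G$ off the time axis such that $E_1,G$ and $G,E_2$ are each photon-connected, i.e.\ a light signal from $E_1$ is received back on $j$'s worldline at $E_2$ after passing through $G$. Applying the lemma to the pairs $E_1,G$ and $G,E_2$ gives, for the $i$-image $\bar g$ of $G$, the two relations $dist(\bar x_1,\bar g)=c_i\cdot time(\bar x_1,\bar g)$ and $dist(\bar g,\bar x_2)=c_i\cdot time(\bar g,\bar x_2)$. The triangle inequality for $dist$ then yields $dist(\bar x_1,\bar x_2)\le c_i\big(time(\bar x_1,\bar g)+time(\bar g,\bar x_2)\big)$, and once the time coordinate of $\bar g$ is known to lie between those of $\bar x_1$ and $\bar x_2$ the right-hand side collapses to $c_i\cdot time(\bar x_1,\bar x_2)$, giving $v_i(j)\le c_i$. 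Strictness follows because $v_i(j)=c_i$ would force a photon through $\bar x_1,\bar x_2$, hence through the spatially coincident $E_1,E_2$, contradicting $c_j>0$.

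The hard part will be exactly the two places where extra information about the worldview relation sneaks in: that it carries the temporal betweenness of $G$ over to $i$'s frame (needed to collapse the triangle inequality to an equality in the time coordinates) and that distinct coordinate points on $j$'s worldline remain distinct events (needed for strictness). Both amount to knowing that the worldview relation is an injective, causally faithful transformation rather than merely a relation that preserves the light cone pairwise; this is the structural content later packaged in Theorem~\ref{thm-apoi}. So the crux is upgrading the boundary-level light-cone preservation to a statement about the interior of the cone, while the remainder is bookkeeping with $AxPh$, $AxEv$, and $AxSelf$.
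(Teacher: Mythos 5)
Your reduction to $v_i(j)<c_i$, your light-cone preservation lemma, and your strictness argument are all correct, but the step you yourself flag as ``the hard part'' is a genuine gap, not deferred bookkeeping, and it is exactly where the entire difficulty of the theorem lives. Nothing you have established forces the time coordinate of $\bar g$ to lie between those of $\bar x_1$ and $\bar x_2$: a relation that preserves lightlike separation pairwise (even in both directions) may a priori send the interior of a light cone to its exterior, and then the triangle inequality only yields $dist(\bar x_1,\bar x_2)\le c_i\big(time(\bar x_1,\bar g)+time(\bar g,\bar x_2)\big)$, whose right-hand side exceeds $c_i\cdot time(\bar x_1,\bar x_2)$, so nothing follows. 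That this cannot be repaired from your established ingredients alone is shown by the $1+1$-dimensional situation: there every step you actually proved (the lemma, the existence of bounce points, the triangle inequality, strictness) still holds, yet faster-than-light inertial observers are known to be consistent with the analogous axioms, because a reflection across a light line preserves the light cone while swapping its interior and exterior; in such a model the image $\bar g$ of your bounce point indeed lands outside the time interval of $\bar x_1,\bar x_2$. So any correct completion must inject an argument that uses spatial dimension at least $2$. The paper does not attempt your direct route at all: it obtains the theorem as a corollary of Theorem~\ref{thm-apoi} (an Alexandrov--Zeeman type rigidity result quoted from Andr\'eka et.\ al.\ (2012b)) and cites Andr\'eka et.\ al.\ (2012a) for a direct proof from slightly stronger axioms. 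If you want to stay elementary, the standard way to close your gap is: (i) show worldview transformations are bijections of $Q^4$ (functionality and injectivity of $ev_i$ follow from $AxPh$ with $c_i>0$ together with $AxEv$), so light lines map onto light lines; (ii) use the cone-theoretic characterization of causal separation, valid only in dimension $\geq 3$: two points are timelike or lightlike related iff every light line through one of them meets the light cone of the other, a property that fails for spacelike pairs and is manifestly preserved by bijections preserving lightlike relatedness both ways.

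A smaller but real flaw: your parenthetical claim that $c_i>0$ ``is guaranteed by the every-direction-everywhere clause'' is false for the displayed formula of $AxPh$. A model whose photons are exactly the bodies with vertical worldlines satisfies that biconditional with $c_i=0$, and in it the theorem itself fails at $i=j$ (both sides equal $0$). You are right that positivity is indispensable, but it must be read as part of $AxPh$ (as in the formal versions in the cited literature, where the speed of light is taken positive); it is an assumption, not a consequence of the formula as printed.
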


A direct proof of Theorem \ref{thm-noftl} from slightly stronger axioms can be found, for example, in Andr\'eka et.\ al.\ (2012a).

By Theorem \ref{thm-async} below $SpecRel_0$ contradicts the Newtonian notion of absolute time and replaces it with an observer dependent one. For a formal statement and a direct proof from slightly stronger axioms, see, e.g., Andr\'eka et.\ al.\ (2007, Thm.\ 11.4, pp.~626-630).  

\begin{thm}\label{thm-async}  $SpecRel_0 \implies$ ``For any two relatively moving inertial observers, there are pairs of events which are simultaneous for one of the observers but not for the other.''
\end{thm}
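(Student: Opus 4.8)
The plan is to show that the worldview transformation between two relatively moving inertial observers, precisely because it respects light propagation, must tilt their simultaneity hyperplanes even though it is only assumed to tilt their time axes; the required pair of events then falls out at once. First I would fix two inertial observers $i$ and $j$ that are relatively moving, so that $wline_i(j)$ is a line which by $AxSelf$ is \emph{not} parallel to $i$'s own time axis $wline_i(i)=\{(0,0,0,t)\}$. By Theorem~\ref{thm-apoi} the worldview transformation $w_{ji}$ is a bijective affine transformation of $Q^4$. The reason $SpecRel_0$ is needed rather than mere affineness is the second structural fact: by $AxPh$, for an observer $k$ the relation $dist(\bar x,\bar y)=c_k\cdot time(\bar x,\bar y)$ holds exactly when some photon is coordinatized at both $\bar x$ and $\bar y$, and since $AxEv$ forces $i$ and $j$ to coordinatize the same events while $w_{ji}$ identifies the corresponding coordinates, $w_{ji}$ must carry $j$'s light cones onto $i$'s light cones. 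Thus $w_{ji}$ is an affine bijection mapping the null set $\{\bar y:dist(\bar x,\bar y)=c_j\cdot time(\bar x,\bar y)\}$ onto the corresponding null set for $c_i$.

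Second, I would extract the two events from this tilting. The set of events $j$ regards as simultaneous with the origin is the hyperplane $H=\{(x,y,z,0)\}$, so I would pick two points of $H$, say $\bar u=(0,0,0,0)$ and $\bar v=(1,0,0,0)$ aligned with $j$'s direction of motion relative to $i$, and set $e_1=ev_j(\bar u)$ and $e_2=ev_j(\bar v)$, which are simultaneous for $j$ by construction. It then remains to show that the $i$-time-coordinates of $w_{ji}(\bar u)$ and $w_{ji}(\bar v)$ differ, so that $e_1,e_2$ are not simultaneous for $i$. Reducing to the plane spanned by $j$'s motion, the light cone there is a pair of null lines, and a cone-preserving affine map is, up to translation and an overall scaling that absorbs the possibly different $c_i,c_j$, a Lorentz boost. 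The boost is nontrivial precisely because $wline_i(j)$ is tilted, and I would conclude that it rotates the spatial direction $H$ off the horizontal, so that $w_{ji}(\bar u)$ and $w_{ji}(\bar v)$ acquire unequal time components.

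The step I expect to be the main obstacle is exactly this coupling: affineness alone permits a shear that tilts the time axis while leaving $H$ horizontal, so the genuine work is to invoke light-cone preservation to exclude it. Concretely, one must verify that a linear map sending the two null directions $(\pm 1,1)$ again to null directions and sending the time direction $(0,1)$ to a genuinely tilted timelike vector is forced to send the spatial direction $(1,0)$ to a vector with nonzero time component. This is the algebraic kernel of the relativity of simultaneity, and it is where the hypothesis of nonzero relative velocity is actually consumed.
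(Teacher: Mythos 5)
Your overall plan---affineness from Theorem~\ref{thm-apoi}, light-cone preservation from $AxPh$ and $AxEv$, and then an argument that a cone-preserving affine map with a tilted time axis must tilt simultaneity---is sound, and it is close in spirit to what the paper does (the paper derives Theorem~\ref{thm-async} from the full strength of Theorem~\ref{thm-apoi}, and cites Andr\'eka et.\ al.\ (2007, Thm.\ 11.4) for a direct proof). However, the step you isolate as the ``algebraic kernel'' is not sufficient, and in its natural four-dimensional reading it is false. Preserving the two null directions of one coordinate $2$-plane while tilting the time axis does \emph{not} force the spatial direction of that plane to tilt. Counterexample: let $w_{ji}$ be a Lorentz boost along the $y$-axis, $(x,y,z,t)\mapsto\big(x,\gamma(y+\beta t),z,\gamma(t+\beta y)\big)$. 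This certainly preserves light cones and tilts the time axis; it sends the null directions $(\pm 1,0,0,1)$ of the $(x,t)$-plane to the null vectors $(\pm 1,\beta\gamma,0,\gamma)$; yet it fixes $(1,0,0,0)$, which stays horizontal, so the events at $\bar o$ and at $(1,0,0,0)$ are simultaneous for \emph{both} observers. Your $2$-dimensional case analysis does not detect this because here the image of the $(x,t)$-plane does not contain $i$'s time axis, so the image null directions are not symmetric about the vertical---they are not ``$(\pm 1,1)$'' with respect to the global time coordinate---and that symmetry is exactly what your computation uses.

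So everything hinges on the choice of $\bar v$, and your phrase ``aligned with $j$'s direction of motion relative to $i$'' does not pin it down: that direction lives in $i$'s coordinates, whereas $\bar v$ is a point of $j$'s coordinate system, and the counterexample above exploits precisely this mismatch (there the relative motion is along $e_2$ while you test $e_1$). The missing lemma is the following. Take $\bar v$ to be the spatial direction of $wline_j(i)$, i.e.\ the direction of $i$'s motion as seen by $j$ (checking first that this worldline is a line neither parallel to $j$'s time axis---which follows from affineness, since relative rest is symmetric under an affine bijection---nor horizontal, which needs cone preservation, e.g.\ via convexity of the cone's interior). Then the plane $P$ spanned by $\bar v$ and $j$'s time axis contains the direction of $wline_j(i)$, hence its image $w_{ji}(P)$ contains the direction of $w_{ji}\big(wline_j(i)\big)=wline_i(i)$, i.e.\ $i$'s time axis. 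Only with this choice is the cone section of the image plane the symmetric pair of null lines, only then does your $2$-dimensional lemma apply, and then it does yield that $w_{ji}(\bar v)-w_{ji}(\bar o)$ has nonzero time component. Two smaller points: Theorem~\ref{thm-apoi} gives affineness only up to a coordinatewise field automorphism (a semilinear map); this is harmless, since such automorphisms preserve horizontal hyperplanes, but it should be said. And had you used the full strength of Theorem~\ref{thm-apoi}---Poincar\'e transformations up to change of units---the gap would disappear, since the standard boost computation settles the claim at once; this is in effect the paper's route.
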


Time dilation and length contraction are two famous predictions of special relativity. Even though $SpecRel_0$ does not imply the exact rates of the dilation and the contraction, it  predicts these effects qualitatively in the following way. 
\begin{thm} \leavevmode \vspace{-\baselineskip} 
\begin{enumerate}
\item $SpecRel_0 \implies$ ``At least one of two relatively moving inertial observers sees that the other's clocks are slowed down.''
\item $SpecRel_0 \implies$ ``At least one of two relatively moving inertial spaceships shrinks according to the other.''
\end{enumerate}
\end{thm}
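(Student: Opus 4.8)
The plan is to reduce both statements to the linear algebra of a single worldview transformation and then exploit a scale-invariant identity. First I would invoke the structural facts already at hand: by Theorem~\ref{thm-apoi} (together with $AxSelf$) the worldview transformation $w_{ij}$ between two inertial observers $i,j$ is an affine bijection, and by $AxPh$ with $AxEv$ it maps the light cone of $i$ onto that of $j$. Restricting attention to the plane spanned by the time axis and the spatial direction of the relative motion, I would work with the linear part of $w_{ij}$ on this plane, a $2\times 2$ matrix $A$ with entries $A_{11},A_{12},A_{21},A_{22}$ acting on coordinates $(x,t)$. Writing out that $A$ sends the null lines $x=\pm c_i t$ to $x=\pm c_j t$, and using the time-orientation convention that future null rays go to the corresponding future null rays, two linear equations pin $A$ down to $A_{11}=(c_j/c_i)A_{22}$ and $A_{12}=c_ic_jA_{21}$; that is, $A$ is a boost up to an overall scale. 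This residual scale is precisely the freedom left by not assuming $c_i=c_j$, and it is what stops $SpecRel_0$ from fixing the exact rates.

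For part (1) I would read the two time-dilation factors straight off $A$. The clock of $i$ runs along $i$'s own time axis $(0,\tau)$, whose $w_{ij}$-image has $j$-time component $A_{22}\tau$, so the factor by which $j$ records $i$'s clock is $d_{ji}=A_{22}$; symmetrically $d_{ij}=(A^{-1})_{22}=A_{11}/\det A$. Writing $a=A_{22}$ and $\beta=A_{21}$ and substituting the boost form yields the key identity
\[
d_{ij}\,d_{ji}=\frac{a^{2}}{a^{2}-c_i^{2}\beta^{2}},
\]
which is manifestly invariant under rescaling either observer's time unit, so the nuisance scale cancels. Relative motion gives $\beta\neq 0$, and Theorem~\ref{thm-noftl} forces $c_i^{2}\beta^{2}<a^{2}$, so the denominator is positive and strictly below the numerator; hence $d_{ij}d_{ji}>1$. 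Since time orientation is preserved both factors are positive, so at least one of them exceeds $1$, which is exactly the assertion that at least one observer finds the other's clock slowed.

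For part (2) I would compute the length each observer assigns to the other's rod by intersecting the two endpoint worldlines with a single simultaneity slice. Taking $j$'s rod of proper length $\ell$ along $j$'s space axis, transporting its endpoint worldlines into $i$'s coordinates by $w_{ji}=w_{ij}^{-1}$, and equating $i$-times, a short computation gives the $i$-measured length $\ell/|a|$ in units normalized so that $c_i=c_j$ (which I adopt to make the two lengths comparable). Thus $i$ sees $j$'s ship shrunk iff $|a|>1$, and by the symmetric computation $j$ sees $i$'s ship shrunk iff $a^{2}-\beta^{2}<|a|$. If neither shrinks then $|a|\le 1$ and $a^{2}-\beta^{2}\ge|a|$; but $|a|\le 1$ gives $a^{2}\le|a|$, so $a^{2}-\beta^{2}\ge|a|\ge a^{2}$, forcing $\beta=0$ and contradicting relative motion. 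Hence at least one ship shrinks.

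The two matrix computations are routine; the step I expect to be the real obstacle is the conceptual one of phrasing the claims so that they are unit-independent. For clocks this is automatic, because the operative quantity $d_{ij}d_{ji}$ is scale-invariant; but for contraction a length measured by $i$ and one measured by $j$ are a priori incommensurable when $c_i\neq c_j$, so making ``shrinks according to the other'' meaningful requires either normalizing to $c_i=c_j$ or recasting the comparison as a scale-invariant ratio, and one must verify the conclusion is independent of that choice. A secondary technical point is justifying the reduction to the motion plane and the preservation of the future null directions, both of which I would draw from the structure of light-cone-preserving affine maps furnished by the earlier theorems.
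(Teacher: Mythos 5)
Your proposal is correct and takes essentially the route the paper itself indicates: the paper gives no detailed proof of this theorem (it defers precise formulations to Andr\'eka et.\ al.\ (2002, Sec.\ 2.5)) but remarks that all these statements can be derived from Theorem~\ref{thm-apoi}, and your argument is exactly such a derivation, pinning the worldview transformation down to a boost-up-to-scale and extracting the scale-invariant consequences (the identity $d_{ij}d_{ji}>1$ for clocks, and the joint unsatisfiability of the two non-contraction conditions for rods). The loose ends you flag yourself all resolve: the ``at least one'' disjunctions are invariant under rescaling either observer's units (so the $c_i=c_j$ normalization in part (2) is harmless, and indeed the computation goes through carrying $c_i,c_j$ throughout), time-orientation reversal is absorbed by taking absolute values via $time(\cdot,\cdot)$, and the field-automorphism clause of Theorem~\ref{thm-apoi} --- which your blanket appeal to affinity glosses over --- is innocuous for these qualitative claims because automorphisms of a Euclidean field fix $0$ and $1$ and preserve the ordering.
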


See Andr\'eka et.\ al.\ (2002, Sec.\ 2.5) for precise formulations of these statements. 

All the theorems above can also be derived from Theorem \ref{thm-apoi} below. 
To state this theorem let us recall that, a map $P:Q^4\rightarrow Q^4$ is a \emph{Poincar\'e transformation} if it is an affine bijection with the following property
\begin{equation*}
time(\bar x,\bar y)^2-dist(\bar x,\bar y)^2=time(\bar x',\bar y')^2-dist(\bar x',\bar y')^2
\end{equation*}
for all $\bar x,\bar y,\bar x',\bar y'\in Q^4$ for which $P(\bar x)=\bar x'$ and
$P(\bar y)=\bar y'$.

Since Poincar\'e transformations are the standard coordinate transformations in special relativity, Theorem \ref{thm-apoi} below basically says that $SpecRel_0$ implies these standard coordinate transformations up to changing units of measurement.\footnote{Permuting the coordinates  by a field automorphism can also be considered as an unusual way of changing the units of measurement if the underlying field has a nontrivial automorphism at all. For example, the field of rational numbers or the field of real numbers does not have a nontrivial automorphism.} 

\begin{thm}\label{thm-apoi}
$SpecRel_0 \implies$ ``The worldview transformations between inertial observers are Poincar\'e transformations up to changing the units of measurement and permuting all coordinates by a field automorphism.''\,\footnote{That is, the coordinate transformation between inertial observers $i$ and $i'$ becomes a Poincar\'e transformation after multiplying all the space coordinates by a positive number, the time coordinates with a possibly different number, and maybe also transforming the coordinate system of $i$ by a mapping $(x,y,z,t)$ to $(\phi(x),\phi(y),\phi(z),\phi(t))$, where $\phi$ is an automorphism of $(Q,+,\cdot)$ and doing the same with the coordinates of $i'$ but with possibly different numbers and automorphism.} 
\end{thm}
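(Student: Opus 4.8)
The plan is to show that each worldview transformation $w_{ii'}$ becomes a Poincar\'e transformation after a suitable rescaling of the space and time units and an application of a field automorphism to all coordinates, following the classical Alexandrov--Zeeman strategy adapted to an arbitrary Euclidean field. First I would establish that $w_{ii'}$ is a bijection of $Q^4$. Totality is immediate from $AxEv$, which guarantees that every coordinate point of $i$ names an event that $i'$ also coordinatizes; single-valuedness and injectivity follow once one checks that distinct coordinate points determine distinct events, and this is where $AxPh$ enters, since the light signals issuable in every direction from every point (together with $AxEField$, which lets us solve the defining equation) supply enough events to separate points. Surjectivity then comes from running the same argument for $w_{i'i}$ and noting that $w_{i'i}=w_{ii'}^{-1}$ as relations.

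Next I would show that $w_{ii'}$ preserves the light-cone relation. By $AxPh$, for observer $i$ a pair $\bar x,\bar y$ lies on the worldline of some photon exactly when $dist(\bar x,\bar y)=c_i\cdot time(\bar x,\bar y)$; since $w_{ii'}$ carries the events of $i$ to the very same events of $i'$, a photon witnessing light-cone-relatedness for $i$ witnesses it for $i'$ as well. Hence $w_{ii'}$ maps the $c_i$-cone through each point onto the $c_{i'}$-cone through its image. From this I would deduce that $w_{ii'}$ is a collineation, i.e.\ maps lines onto lines: light-cone-preserving bijections take lightlike lines to lightlike lines, and a geometric argument on intersecting families of cones upgrades this to preservation of all lines.

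The fundamental theorem of affine geometry then applies: any line-preserving bijection of $Q^4$ (here the dimension $4\ge 2$ is essential) is semi-affine, of the form $\bar x\mapsto A\,\phi(\bar x)+\bar b$ with $A$ an invertible linear map, $\bar b$ a translation, and $\phi$ a field automorphism of $(Q,+,\cdot)$ applied coordinatewise. This is precisely where the field automorphism of the statement originates, and for rigid fields such as $\mathbb{R}$ or $\mathbb{Q}$ the factor $\phi$ is trivial. Finally, feeding the light-cone preservation back into the semi-affine form pins down the linear part $A$: a linear map taking one nondegenerate Lorentzian cone to another must preserve the corresponding quadratic form up to a positive scalar, so after normalizing both cones (which is exactly the separate change of space and time units recorded in the footnote) $A$ is a scalar multiple of a Lorentz transformation. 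Combining the Lorentz part with $\bar b$ yields a Poincar\'e transformation, while the scalar together with the two normalizations is absorbed into the unit change, completing the decomposition.

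I expect the main obstacle to be the collineation step together with its general-field Alexandrov--Zeeman input. Over $\mathbb{R}$ the passage from light-cone preservation to line preservation is classical, but over an arbitrary Euclidean field one cannot invoke continuity or completeness and must instead argue purely order-algebraically, controlling how cones intersect using only $AxEField$. Making this incidence argument work uniformly, and correctly matching the separate space and time rescalings demanded by the cone normalization, is the delicate part; the remaining steps are then routine linear algebra over an ordered field.
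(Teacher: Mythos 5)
Your strategy is essentially the right one, and in fact it is the argument that lies behind the paper's own (very short) proof: the paper does not prove Theorem~\ref{thm-apoi} directly but derives it from Andr\'eka et.\ al.\ (2012b, Theorem 7.8 together with Proposition 7.19), and that cited result is established along exactly the lines you describe --- worldview transformations are total by $AxEv$ and functional/injective because $AxPh$ makes $ev_i$ injective, they carry the $c_i$-cone at each point onto the $c_{i'}$-cone at its image, and an Alexandrov--Zeeman-type theorem over Euclidean fields (where the field automorphism enters, via the fundamental theorem of affine geometry applied to the resulting collineation) reduces everything to normalizing the cones, i.e.\ to the change of space and time units recorded in the footnote. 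So you have not taken a different route; you have reconstructed the route the paper outsources to its reference.

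Two caveats are worth making explicit. First, the step you rightly single out as delicate --- upgrading preservation of lightlike lines to preservation of all lines without continuity or completeness --- is not a routine incidence argument but a substantial theorem over general Euclidean (ordered) fields; it is precisely the content of the generalized Alexandrov--Zeeman theorem established in the cited literature (see also Madar\'asz 2002), so your outline is complete only modulo that result. Second, your injectivity argument tacitly requires the photon speed to be positive: the displayed formula for $AxPh$ does not literally exclude $c_i=0$, and in a model with $c_i=0$ (one stationary photon per vertical line) all axioms of $SpecRel_0$ as formalized here hold while $ev_i$ fails to be injective and $w_{ii'}$ is not even a function. You should therefore state that you use the intended reading of $AxPh$ (light signals can be sent with positive speed in every direction), as the versions of the axiom in the cited papers do.
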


Theorem \ref{thm-apoi} follows immediately from Andr\'eka et.~al.\ (2012b, Theorem 7.8) by (Ibid Proposition 7.19).

\section{SYMMETRY AXIOMS IN SPECIAL RELATIVITY}

A great many symmetry axioms were formulated and investigated in Andr\'eka et.\ al.\ (2002, Sec.\ 2.8 and Sec.\ 3.9). Here we highlight only two of them, both of which are about harmonizing the units of measurement of different inertial observers. 

So how can we test if two different inertial observers use the same units of measurement or not?

\subsection{Symmetry of space}

To compare the units of two different inertial observers measuring spatial distances, we can to ask them to determine the distances of events which are simultaneous for both of them. If they get the same distances for these events, then they use the same units. Otherwise, the one who uses the smaller units gets bigger numbers for the distances. In the next axiom, we can use this idea to formulate  that different inertial observers use the same units to measure spatial distances.  

\underline{$AxSymDist$:}  Inertial observers agree as for the spatial distance between events if these events are simultaneous for both of them, i.e.,
\begin{multline*}
(\forall i,i'\in IOb)(\forall \bar x,\bar y,\bar x',\bar y'\in Q^4)\Big[ev_i(\bar x)=ev_{i'}(\bar x'), ev_i(\bar y)=ev_{i'}(\bar y'), \\ time (\bar x,\bar y)=0, time (\bar x',\bar y')=0  \implies dist(\bar x,\bar y)=dist (\bar x',\bar y')\Big].   
\end{multline*}

\subsection{Symmetry of time}

Another idea to compare the units of measurement of different inertial observers is to ask them to compare the ticking rate of each others' clocks. If they see each others' clock behaving the same way, e.g., slowed down with the same rate, then they use the same units to measure time differences. Otherwise, the one who uses the smaller unit considers the other's clock slowed down more (he would consider the other's clocks slow even if they are stationary with respect to each other).  The next axiom uses this idea to formulate that different inertial observers use the same units to measure time differences.  

\underline{$AxSymTime$:}  Any two inertial observers see each others' clocks slowed down with the same rate, i.e.,
\[
(\forall i,i'\in IOb)\forall t\Big[time\big(w_{ii'}(t\cdot \bar 1_t),w_{ii'}(\bar o)\big)=time\big(w_{i'i}(t\cdot \bar 1_t),w_{i'i}(\bar o)\big)\Big],
\]
where $\bar o = (0,0,0,0)$ and $\bar 1_t = (0,0,0,1)$. 

\subsection{Connections between the symmetry axioms}

The value of the speed of light depends on the units of measurement inertial observers choose to measure space and time. After having $AxPh$ assumed, assuming that the speed of light is 1 only means that inertial observers measure spatial distance in the units corresponding to their time unit. For example, if somebody measures time in years he only has to measure distance in light-years to have 1 for the speed of light, etc. 

If $AxPh$ and $AxEField$ are assumed, then we can ensure that the speed of light is the same for different inertial observers by the following axiom. 

\underline{$Ax(c_i=c_j)$}: Inertial observers use units to measure time differences and spatial distances such that the speed of light is the same for all of them, i.e., \label{p-c}
\[
(\forall i,j\in IOb)(\forall p\in Ph)\big[v_i(p)=v_j(p)\big].
\] 
Likewise, if $AxPh$ and $AxEField$ are assumed, then we can set the speed of light to 1 for all the inertial observers by the following axiom. 

\underline{$Ax(c=1)$}: Inertial observers use units to measure time differences and spatial distances such that the speed of light is one for all of them, i.e.,
\[
(\forall i\in IOb)(\forall p\in Ph)\big[v_i(p)=1\big].
\] 

Originally, $Ax(c=1)$ was part of $AxPh$, see, e.g., Andr\'eka et.\ al.\ (2002, AxE on p.~51) or Andr\'eka et.\ al.\ (2007, p.~621). Later it was moved to the corresponding symmetry axiom because it is strongly related to the units of measurement chosen by inertial observers, see, e.g., Andr\'eka et.\ al.\ (2012a). Here we have introduced $Ax(c=1)$ as an axiom in its own right because it is easier to understand the roles and intuitive meanings of $AxSymDist$ and $Ax(c=1)$ if they are separated. 

Let us introduce axiom system $SpecRel$ as follows: 
\[
SpecRel\stackrel{\textsf{\tiny def.}}{=}SpecRel_0+AxSymDist+Ax(c=1).
\]

\begin{thm}\label{thm-cons}
$SpecRel$ is a consistent axiom system. Moreover, there is a model of $SpecRel$ where for every inertial observer $i$ and Poincar\'e transformation $P$ there is an inertial observer $j$ such that $w_{ij}=P$.
\end{thm}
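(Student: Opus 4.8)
The plan is to exhibit an explicit model built on real Minkowski spacetime, with the set of inertial observers indexed by the Poincar\'e group itself so that the ``moreover'' clause becomes automatic. Concretely, I would take $Q=\mathbb{R}$ with its usual operations and ordering, which is a Euclidean field and hence validates $AxEField$. I fix once and for all an ``absolute'' copy of $\mathbb{R}^4$ carrying the Minkowski interval $I(\bar p,\bar q)=time(\bar p,\bar q)^2-dist(\bar p,\bar q)^2$, and I let the bodies consist of one observer-body $O_g$ for each Poincar\'e transformation $g$ and one photon-body $P_\ell$ for each null line $\ell$ of the absolute space (a line on which $I$ vanishes between any two of its points). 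I declare $IOb=\{O_g:g\text{ Poincar\'e}\}$ and $Ph=\{P_\ell:\ell\text{ null}\}$, specify the absolute location of each body (the observer-body $O_h$ occupies the time-like line $h(\text{time axis})$ and $P_\ell$ occupies $\ell$), and define the worldview relation through the coordinate map attached to each observer by $W(O_g,b,\bar x)\iff b$ is present at the absolute point $g(\bar x)$.

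With this setup the verifications are largely forced by the fact that Poincar\'e transformations preserve $I$ (hence nullness, time-likeness, and spatial distance on simultaneity slices) and are affine bijections. First I would compute that $ev_{O_g}(\bar x)=ev_{O_h}(\bar x')$ holds exactly when $g(\bar x)=h(\bar x')$; taking $\bar x'=h^{-1}(g(\bar x))$ gives $AxEv$, and the injectivity of absolute location on events (distinct absolute points admit distinct families of null lines through them) upgrades the worldview relation to a genuine function, namely $w_{O_gO_h}=h^{-1}\circ g$, which is again a Poincar\'e transformation. Here $AxSelf$ reduces to $g$ being injective on the time axis, while $AxPh$ together with $Ax(c=1)$ follows because two coordinate points admit a common photon iff their images under $g$ lie on one null line iff $I(\bar x,\bar y)=0$ iff $dist(\bar x,\bar y)=time(\bar x,\bar y)$, so every observer measures light speed $c=1$. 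For $AxSymDist$ I would note that the relevant coordinate change is the Poincar\'e map $R=h^{-1}g$; when $time(\bar x,\bar y)=0=time(R\bar x,R\bar y)$ the identity $I(\bar x,\bar y)=I(R\bar x,R\bar y)$ collapses to $dist(\bar x,\bar y)=dist(R\bar x,R\bar y)$.

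Finally, the ``moreover'' clause is where the indexing pays off: given any $O_i=O_g$ and any Poincar\'e transformation $P$, I solve $h^{-1}g=P$ by setting $h=gP^{-1}$, which is again Poincar\'e, so $O_{gP^{-1}}\in IOb$ and $w_{O_gO_{gP^{-1}}}=P$ by the formula above. I expect the only genuinely delicate point to be the well-definedness of the worldview transformation as a single-valued total function, i.e.\ showing that no two distinct absolute spacetime points carry the same event; this is precisely what forces me to include enough photons (all null lines) so that the family of photons through a point pins the point down. Everything else is a direct transcription of the interval-preserving property of the Poincar\'e group.
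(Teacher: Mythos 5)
Your proposal is correct and is essentially the same construction that the paper's proof invokes by citation (Andr\'eka et.\ al.\ 2002, Sec.\ 3.6 and 2007, Sec.\ 2.5): the standard Minkowski model over $Q=\mathbb{R}$ with inertial observers indexed by Poincar\'e transformations, photons by null lines, and the worldview relation pulled back through the indexing transformation, so that $w_{O_gO_h}=h^{-1}\circ g$ and the ``moreover'' clause follows by solving $h=gP^{-1}$. The one delicate point you flag --- that the family of null lines through an absolute point determines that point, making the point-to-event map injective --- is indeed the step that turns the worldview transformations into well-defined bijections, and your treatment of it is sound.
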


See Andr\'eka et.\ al.\ (2002, Sec.\ 3.6) and Andr\'eka et.\ al.\ (2007, Sec.\ 2.5) for constructions proving theorem~\ref{thm-cons}. 

By Proposition \ref{prop-ineq} below, $AxSymTime$ and $AxSymDist$ are independent from the rest of the axioms of $SpecRel$ and they are not equivalent if only $SpecRel_0$ is assumed. 

\begin{prop} \label{prop-ineq} \leavevmode \vspace{-\baselineskip} 
\begin{enumerate}
\item $SpecRel_0 + Ax(c=1)\centernot\implies AxSymTime$

\item $SpecRel_0 + Ax(c=1) \centernot\implies AxSymDist$

\item $SpecRel_0 + AxSymTime \centernot\implies AxSymDist$

\item $SpecRel_0 + AxSymDist \centernot\implies AxSymTime$
\end{enumerate}
\end{prop}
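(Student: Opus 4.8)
The plan is to establish each of the four non-implications by exhibiting a countermodel: for part $(n)$ I would build a model of $SpecRel_0$ together with the hypothesis axiom(s) in which the conclusion axiom fails. Theorem \ref{thm-apoi} tells us exactly which models to try, since under $SpecRel_0$ every worldview transformation is a Poincar\'e transformation composed with a rescaling of the space and time units (ignoring field automorphisms, which do not occur over $\mathbb{R}$). The simplest maps of this shape are the pure diagonal rescalings, and these already suffice. Concretely, I would fix $Q=\mathbb{R}$ (a Euclidean field) and, for parameters $s,\tau>0$, construct a two-observer model whose only nontrivial worldview transformation is $S=\mathrm{diag}(s,s,s,\tau)$.

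Setting up such a model is routine. Take $IOb=\{e,a\}$, decree that $W(a,b,\bar x)$ holds iff $W(e,b,S^{-1}\bar x)$ holds (so that $w_{ea}=S$, and $AxEv$ is automatic since $S$ is a bijection), place both observers on the time axis of $e$ (so that $AxSelf$ holds, as $S$ fixes the time axis setwise), and populate $Ph$ with one light signal for every line satisfying $dist(\bar x,\bar y)=time(\bar x,\bar y)$ in $e$'s coordinates. Then $AxPh$ holds for $e$ with $c_e=1$, and since $S$ maps the light lines of $e$ bijectively onto the lines with $dist=(s/\tau)\,time$, it also holds for $a$ with $c_a=s/\tau$. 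Thus $SpecRel_0$ holds in this model for every choice of $s,\tau>0$.

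The whole content then reduces to three short computations describing how $S$ acts on the relevant quantities. First, $Ax(c=1)$ holds iff $c_a=1$, i.e.\ iff $s=\tau$. Second, for events simultaneous for both observers (here $w_{ea}$ has no boost, so $time(\bar x,\bar y)=0$ is preserved) the image distance equals $s\cdot dist(\bar x,\bar y)$, so $AxSymDist$ holds iff $s=1$. Third, $time\big(w_{ea}(t\bar 1_t),w_{ea}(\bar o)\big)=\tau|t|$ while $time\big(w_{ae}(t\bar 1_t),w_{ae}(\bar o)\big)=\tau^{-1}|t|$, so $AxSymTime$ holds iff $\tau=1$. With these in hand the four countermodels are immediate: taking $s=\tau\neq1$ gives a model of $SpecRel_0+Ax(c=1)$ in which both $AxSymTime$ and $AxSymDist$ fail, proving (1) and (2) at once; taking $s\neq1=\tau$ gives $AxSymTime$ but not $AxSymDist$, proving (3); and taking $s=1\neq\tau$ gives $AxSymDist$ but not $AxSymTime$, proving (4).

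The main obstacle is not any single hard step but the bookkeeping needed to be sure the constructions really are models of $SpecRel_0$ --- in particular that $AxPh$ holds with the correct light-speed $c_a$ on the $a$-side and that $AxEv$ and $AxSelf$ survive the definition of $W$ --- and, for the positive halves of (3) and (4), that the surviving symmetry axiom is checked for \emph{all} admissible pairs of events rather than on a single example. The conceptual crux, which the three equivalences make transparent, is that $Ax(c=1)$ forces the rescaling to be isotropic ($s=\tau$) and thereby couples the two symmetry axioms, whereas without it the independent parameters $s$ and $\tau$ decouple $AxSymDist$ from $AxSymTime$. If one prefers more physically natural countermodels one may replace the pure rescaling by its composition with a genuine Lorentz boost; a short check shows this leaves the three displayed equivalences unaffected, but it is not needed for the proof.
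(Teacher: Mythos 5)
Your proposal is correct and uses essentially the same idea as the paper: diagonal rescalings of an observer's coordinates, uniform ($s=\tau$) to preserve $Ax(c=1)$ while breaking both symmetry axioms for items (1)--(2), and anisotropic (scaling only space or only time) to separate $AxSymDist$ from $AxSymTime$ for items (3)--(4). The only difference is presentational --- the paper rescales one observer inside an existing model of $SpecRel$ (from Theorem \ref{thm-cons}), while you build a self-contained two-observer model from scratch; both hinge on the same scaling mechanism.
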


\begin{proof}[On the proof.]
All of the items of this proposition are proved by constructing appropriate models. We can start from any model of $SpecRel$ containing enough (at least two) observers. For example, the one whose existence is stated by Theorem~\ref{thm-cons}.  

To prove items 1.\ and 2., we should construct a model of $SpecRel_0 + Ax(c=1)$, where $AxSymTime$ and $AxSymDist$ do not hold. If we modify an inertial observer's coordinate system in a model of $SpecRel_0 + Ax(c=1)$ by multiplying all the coordinates with the same positive number, we get another model of $SpecRel_0 + Ax(c=1)$. Using this modification it is easy to make $AxSymTime$ and $AxSymDist$ invalid in the models of $SpecRel$ without changing the validity of the rest of the axioms.  

To prove items 3.\ and 4., we should make one of $AxSymTime$ and $AxSymDist$ invalid without changing the validity of the other and the axioms of $SpecRel_0$. This can be done easily by scaling only the time or only the space coordinates of the inertial observers.\footnote{Of course, this construction also makes axiom $Ax(c=1)$ invalid by Proposition \ref{thm-eq}.}   
\end{proof}

By Proposition \ref{thm-eq} below, $AxSymTime$ and $AxSymDist$ are equivalent assuming $Ax(c_i=c_j)$ and $SpecRel_0$. 

\begin{prop}\label{thm-eq} 
$SpecRel_0 + Ax(c_i=c_j)\implies (AxSymTime \iff AxSymDist)$
\end{prop}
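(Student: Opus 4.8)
The plan is to reduce both $AxSymDist$ and $AxSymTime$, for each fixed pair of inertial observers, to one and the same numerical condition, and then read off the biconditional. Fix $i,i'\in IOb$, put $w:=w_{ii'}$, and note $w_{i'i}=w^{-1}$. By Theorem~\ref{thm-apoi}, $w$ is, up to a coordinatewise field automorphism, an affine bijection. These automorphisms may be discarded from the start: in a Euclidean field the positive elements are exactly the nonzero squares, so any field automorphism $\phi$ preserves the ordering and commutes with $+$, $\cdot$ and with taking square roots; hence $\phi$ commutes with the formation of $dist$ and $time$ and, being injective, preserves the truth of every equation of the form $dist=dist$, $time=time$ or $time=0$. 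Since $AxSymDist$ and $AxSymTime$ are built solely from such equations, no generality is lost in assuming $w$ genuinely affine; let $M$ be its linear part, so that $w^{-1}$ has linear part $M^{-1}$.

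Next I would fix the shape of $M$. Theorem~\ref{thm-apoi} exhibits $M$ as a Lorentz transformation sandwiched between two diagonal scalings, one per observer, each scaling space by one factor and time by another. The assumption $Ax(c_i=c_j)$ rigidifies these scalings: $AxPh$ equips each observer $i$ with a light speed $c_i$, $Ax(c_i=c_j)$ forces these to a common value $c$, and therefore $M$ must carry the speed-$c$ cone of $i$ onto the speed-$c$ cone of $i'$. Following the effect of a scaling on the speed of light shows that within each scaling the time factor must be $c$ times the space factor, so each scaling is uniform up to the fixed matrix $\mathrm{diag}(1,1,1,c)$. Collecting the two uniform parts into a single positive scalar, one obtains, writing $\bar u=(u_1,u_2,u_3,u_4)$,
\[
M=\sqrt{\rho}\,\widetilde M,\qquad g\circ M=\rho\,g,\qquad g(\bar u):=c^2u_4^2-u_1^2-u_2^2-u_3^2,
\]
where $\rho>0$ depends on the pair $(i,i')$ and $\widetilde M$ preserves the Lorentzian form $g$. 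Converting Theorem~\ref{thm-apoi} together with $Ax(c_i=c_j)$ into this normal form, including the automorphism bookkeeping and the verification that $Ax(c_i=c_j)$ indeed uniformizes the scalings, is the step I expect to require the most care; the remainder is short.

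For $AxSymDist$ I would evaluate $g\circ M=\rho\,g$ on displacements $\bar u=\bar x-\bar y$ that are spacelike for both observers, that is, with $u_4=0$ and $(M\bar u)_4=0$. On this subspace, which is nontrivial since it is the intersection of two hyperplanes of $Q^4$, both $g(\bar u)$ and $g(M\bar u)$ collapse to minus the squared spatial length, so $g\circ M=\rho\,g$ says that the spatial length of $M\bar u$ is $\sqrt{\rho}$ times that of $\bar u$. As $AxSymDist$ asserts exactly that these two spatial lengths agree for all such $\bar u$, and a nonzero such $\bar u$ exists, $AxSymDist$ holds for the pair $(i,i')$ if and only if $\rho=1$.

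For $AxSymTime$ I would compute both sides directly: since $w_{ii'}(t\cdot\bar 1_t)-w_{ii'}(\bar o)=t\,M\bar 1_t$, one gets $time\big(w_{ii'}(t\cdot\bar 1_t),w_{ii'}(\bar o)\big)=|t|\,|M_{44}|$ and likewise $time\big(w_{i'i}(t\cdot\bar 1_t),w_{i'i}(\bar o)\big)=|t|\,|(M^{-1})_{44}|$, where $M_{44}$ denotes the time--time entry. From $M=\sqrt{\rho}\,\widetilde M$ these equal $\sqrt{\rho}\,|t|\,|\widetilde M_{44}|$ and $\rho^{-1/2}\,|t|\,|(\widetilde M^{-1})_{44}|$. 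Writing $\eta$ for the diagonal matrix of $g$, any $g$-preserving map satisfies $\widetilde M^{-1}=\eta^{-1}\widetilde M^{\mathsf T}\eta$, whose $(4,4)$ entry equals $\widetilde M_{44}$; thus $(\widetilde M^{-1})_{44}=\widetilde M_{44}$, and this entry is nonzero because $\widetilde M$ sends the timelike vector $\bar 1_t$ to a timelike vector. Hence $AxSymTime$ for the pair $(i,i')$ reduces to $\sqrt{\rho}=\rho^{-1/2}$, i.e.\ once more to $\rho=1$. Having shown that for every pair of inertial observers each of the two axioms is equivalent to $\rho_{ii'}=1$, the axioms $AxSymTime$ and $AxSymDist$ are equivalent, as claimed.
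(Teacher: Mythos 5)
Your overall skeleton is sound and is, in essence, an explicit per-pair version of the paper's own argument: the paper identifies both $AxSymTime$ and $AxSymDist$ (in the $c=1$ case) with the single statement ``the worldview transformations are Poincar\'e transformations'' and then rescales time coordinates by $c$, while you identify both with the single numerical condition $\rho=1$ in the normal form $M=\sqrt{\rho}\,\widetilde M$. Your two computations are correct as far as they go: on the doubly simultaneous subspace $g$ collapses to minus the squared spatial length, and $(\widetilde M^{-1})_{44}=\widetilde M_{44}$ for any $g$-preserving $\widetilde M$.

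The gap is in your opening move, discarding the field automorphisms. Your justification (a coordinatewise automorphism commutes with $dist$ and $time$ and preserves equations) only works when the \emph{same} automorphism acts on both observers' coordinates; Theorem~\ref{thm-apoi} allows \emph{different} automorphisms $\phi_i\neq\phi_{i'}$, so that $w_{ii'}=\Phi_{i'}^{-1}\circ\tilde L\circ\Phi_i$ with $\tilde L$ affine, and the axioms then become cross-observer equations. Concretely, $AxSymDist$ for the pair unwinds to $dist(\tilde L\bar u,\tilde L\bar v)=\chi\big(dist(\bar u,\bar v)\big)$ on doubly simultaneous pairs, where $\chi=\phi_{i'}\circ\phi_i^{-1}$, and this is not invariant under erasing $\chi$: over a Euclidean field with a nontrivial automorphism (e.g.\ a suitable non-Archimedean real closed field) one can take a standard model and recoordinatize one observer by a coordinatewise automorphism; this preserves $SpecRel_0+Ax(c_i=c_j)$, gives $\tilde L$ Poincar\'e (so $\rho=1$), and yet both symmetry axioms fail because $\chi\neq\mathrm{id}$. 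So ``each axiom $\iff\rho=1$'' is false as stated; the correct statement is ``each axiom $\iff$ ($\rho=1$ and the mismatch automorphism is the identity).'' For $AxSymDist$ the repair is immediate: the axiom says $\chi$ acts as multiplication by $\sqrt{\rho}$ on nonnegative elements, and $\chi(1)=1$ forces $\rho=1$ and then $\chi=\mathrm{id}$. But for $AxSymTime$ the unwound condition only yields (at $t=1$ and then for general $t$) that $\alpha=\phi_{i'}^{-1}\circ\phi_i$ satisfies $\alpha=\alpha^{-1}$, and to conclude $\alpha=\mathrm{id}$ you need a rigidity fact absent from your proposal: an automorphism of a Euclidean field is order-preserving, and an order-preserving automorphism of finite order is the identity (if $\sigma(a)>a$ then $a<\sigma(a)<\sigma^2(a)<\dots<\sigma^n(a)=a$, a contradiction). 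With that lemma both axioms reduce to the same condition and your proof closes; without it, the step ``no generality is lost in assuming $w$ genuinely affine'' is unjustified, and it is precisely the step on which the equivalence could conceivably have broken.
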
 

\begin{proof}[On the proof] \vspace{-\baselineskip} 
If $Ax(c=1)$ is also satisfied, both $AxSymTime$ and $AxSymDist$ are equivalent to the statement that the worldview transformations are Poincar\'e transformations. 
The other cases can easily be reduced to the $c=1$ case by multiplying all the observer's time coordinates by factor $c$ (i.e., the speed of light that is the same for all inertial observers by axiom $Ax(c_i=c_j)$). 
\end{proof}

By Proposition \ref{prop-c} below, $AxSymTime$ and $AxSymDist$ imply that different inertial observers agree on the exact value of the speed of light if $SpecRel_0$ is assumed. 

\begin{prop}\label{prop-c}
$SpecRel_0 + AxSymTime + AxSymDist \implies Ax(c_i=c_j)$
\end{prop}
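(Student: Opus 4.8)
The plan is to push everything through the structure theorem for worldview transformations (Theorem~\ref{thm-apoi}) and to show that the two symmetry axioms force the space units and the time units of any two inertial observers to agree, after which equality of the light speeds is automatic. Fix inertial observers $i$ and $i'$. By Theorem~\ref{thm-apoi} there are positive quantities $s_i,\tau_i,s_{i'},\tau_{i'}$ and a Poincar\'e transformation $P$ with $w_{ii'}=R_{i'}^{-1}\circ P\circ R_i$, where $R_j=\mathrm{diag}(s_j,s_j,s_j,\tau_j)$ rescales the space coordinates of $j$ by $s_j$ and its time coordinate by $\tau_j$ (possibly after a coordinatewise field automorphism $\phi$). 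I would first note that $\phi$ may be ignored: in a Euclidean field every automorphism fixes the squares and hence preserves $<$, so it commutes with both $dist$ and $time$ and therefore does not disturb the equalities that the symmetry axioms assert. By $AxPh$, in $j$'s coordinates a separation lies along a light signal exactly when $dist=c_j\cdot time$, so the light cone of $j$ is symmetric about its time axis; and since worldview transformations carry photon worldlines to photon worldlines, $w_{ii'}$ maps $i$'s light cone onto $i'$'s. The target is to prove $c_i=c_{i'}$.

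Next I would read off one constraint from each axiom. For $AxSymTime$, since $w_{ii'}$ is affine and $\bar 1_t$ points along $i$'s time axis, the quantity $time(w_{ii'}(t\cdot\bar 1_t),w_{ii'}(\bar o))$ equals $|t|\cdot(\tau_i/\tau_{i'})\cdot|\gamma|$, where $\gamma$ is the time--time entry of the Lorentz part of $P$; the corresponding quantity for $w_{i'i}$ is $|t|\cdot(\tau_{i'}/\tau_i)\cdot|\gamma|$ with the very same $\gamma$, because the Lorentz part of $w_{i'i}$ is the inverse map and has equal time--time entry. Hence $AxSymTime$ gives $\tau_i/\tau_{i'}=\tau_{i'}/\tau_i$, i.e.\ $\tau_i=\tau_{i'}$. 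For $AxSymDist$ I would feed it the separations that are simultaneous for both observers: if a separation has vanishing time component in $i$'s coordinates and its image has vanishing time component in $i'$'s, then in the rescaled (Poincar\'e) coordinates both the vector and its $P$-image are purely spatial, so $P$---being a Lorentz map and hence preserving the Minkowski form---preserves their Euclidean length. Tracking the two spatial rescalings then yields $dist(\bar x',\bar y')=(s_i/s_{i'})\cdot dist(\bar x,\bar y)$, and $AxSymDist$ forces $s_i=s_{i'}$.

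With $\tau_i=\tau_{i'}$ and $s_i=s_{i'}$ the two rescalings coincide, $R_i=R_{i'}=:R$, so $w_{ii'}=R^{-1}\circ P\circ R$ and, in the rescaled coordinates, $P$ maps $i$'s light cone, now of slope $k_i:=c_i s/\tau$, onto $i'$'s, of slope $k_{i'}:=c_{i'}s/\tau$, both symmetric about the common time axis. The conclusion $c_i=c_{i'}$ reduces to the geometric fact that a Lorentz transformation can carry one time-axis-symmetric cone onto another only by preserving its slope, and this is the step I expect to be the main obstacle. The null cone ($k=1$) is fixed; for $k\neq 1$ the cone is a metric sphere about the time axis in the associated hyperbolic ($k<1$) or de~Sitter ($k>1$) geometry, and a Lorentz isometry sends such a sphere to a sphere of equal radius centered at the image of the axis, so a symmetric image forces the axis to be fixed---$P$ a spatial rotation---and the slope to be unchanged. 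Ruling out cleanly that a genuine boost could manufacture a second symmetric cone, and handling rotations, boosts, and (for full generality) improper or time-reversing transformations uniformly, is the delicate part. Two smaller points also need care: that the automorphism truly drops out of the asserted equalities, and that the space of separations simultaneous for both observers is nonzero---two-dimensional when $i$ and $i'$ move relative to one another, three-dimensional when they do not---so that $AxSymDist$ genuinely constrains $s_i/s_{i'}$. Granting the cone fact, $k_i=k_{i'}$, hence $c_i=c_{i'}$, which is exactly $Ax(c_i=c_j)$.
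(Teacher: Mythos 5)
Your overall route is the same as the paper's: invoke Theorem~\ref{thm-apoi}, show that $AxSymTime$ and $AxSymDist$ kill the freedom in the time and space rescalings, and then argue that the residual Poincar\'e freedom cannot produce different light speeds. Your two unit-fixing computations are correct and are exactly the content of the paper's sketch: the identity $(A^{-1})_{44}=A_{44}$ for Lorentz matrices gives $\tau_i=\tau_{i'}$ from $AxSymTime$, and invariance of the Minkowski form on vectors that are purely spatial for both observers, together with your dimension count, gives $s_i=s_{i'}$ from $AxSymDist$. The first genuine gap is your treatment of the field automorphisms. Theorem~\ref{thm-apoi} allows \emph{different} automorphisms $\phi_i$ and $\phi_{i'}$ for the two observers, and the two sides of each symmetry axiom are evaluated in different observers' coordinates, hence twisted by different automorphisms; carried through your computation, $AxSymTime$ reads
\[
\phi_{i'}^{-1}\big(|\phi_i(t)|\,(\tau_i/\tau_{i'})\,|\gamma|\big)=\phi_i^{-1}\big(|\phi_{i'}(t)|\,(\tau_{i'}/\tau_i)\,|\gamma|\big),
\]
and nothing cancels: ``automorphisms preserve equalities'' helps only when the \emph{same} automorphism acts on both sides. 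The gap is repairable---collapse the two into $\psi=\phi_{i'}^{-1}\circ\phi_i$; validity for all $t$ then forces $\psi\circ\psi=\mathrm{id}$, and a Euclidean field admits no automorphism of order two (automorphisms preserve the order because positives are exactly the nonzero squares, while an involution would have to send the square root generating the quadratic extension $Q/\mathrm{Fix}(\psi)$ to its negative); so $\psi=\mathrm{id}$ and your computation resumes---but as written this step is wrong, not merely terse.

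The second gap is the one you flag yourself: the conclusion is conditional on the ``cone fact,'' which you never prove, and your hyperbolic/de~Sitter sphere argument is a metric argument over $\mathbb{R}$ that would in any case have to be redone algebraically, since $Q$ is only assumed Euclidean. There are two clean ways to close it. (i) The rescalings in Theorem~\ref{thm-apoi} can be taken to normalize each observer's light speed to $1$ (this is how such characterizations are proved: one first rescales so that photon lines become null lines, then applies an Alexandrov--Zeeman type theorem); with that refinement $\tau_j=c_j\,s_j$, and $\tau_i=\tau_{i'}$, $s_i=s_{i'}$ yield $c_i=c_{i'}$ immediately, with no cone geometry at all. (ii) Alternatively, the cone fact has a short algebraic proof: if the Lorentz matrix $A$ sends every direction $(k\hat u,1)$, with $\hat u$ a Euclidean unit vector, to a direction of slope $k'$, write the last row of $A$ as $(\bar a,\alpha)$; invariance of the form gives $1-k^2=(1-k'^2)\big(k\,\bar a\cdot\hat u+\alpha\big)^2$ for every $\hat u$, and since $\bar a\cdot\hat u$ ranges over $[-|\bar a|,|\bar a|]$, constancy of $(k\,\bar a\cdot\hat u+\alpha)^2$ forces either $k=k'=1$ or $\bar a=\bar 0$; in the latter case the $(4,4)$ entry of $A\eta A^{T}=\eta$ gives $\alpha^2=1$, hence $k=k'$. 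With either (i) or (ii), plus the automorphism repair above, your outline becomes a complete proof; as submitted, it establishes the proposition only modulo an unproven lemma and an incorrect dismissal of the automorphisms.
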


\begin{proof}[On the proof.]
Using Theorem \ref{thm-apoi}, it is easy to prove this proposition. Axioms $AxSymTime$ and $AxSymDist$ fix both the time and space units of measurement of inertial observers. Without changing these units what remains from the possible worldview transformations characterized by Theorem \ref{thm-apoi} leaves no flexibility to have inertial observers with different values for the speed of light. 
\end{proof}

\subsection{The role of the symmetry axioms in certain predictions of relativity}

By Theorem~\ref{thm-par} below, $SpecRel$ captures time dilation and length contraction predictions of special relativity even quantitatively. 

\newpage 
\begin{thm}\label{thm-par} \leavevmode \vspace{-\baselineskip} 
  \begin{enumerate}
  \item $SpecRel \implies$ ``Relatively moving inertial observers see that each others' clocks are slowed down exactly by the Lorentzian contraction factor.''
  \item $SpecRel \implies$ ``Relatively moving inertial observers see each others' spatial distances shrinking exactly by the Lorentzian contraction factor in the direction of motion.''
  \end{enumerate}
\end{thm}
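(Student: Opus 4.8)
The plan is to reduce both items to the geometry of Poincar\'e transformations, once the symmetry axioms have forced the worldview transformations to be exactly such maps. First I would note that $SpecRel$ contains $Ax(c=1)$, which immediately yields $Ax(c_i=c_j)$ since every light signal then has speed $1$ for every inertial observer; hence by Proposition~\ref{thm-eq} the axiom $AxSymTime$ holds as well, so that under $SpecRel$ \emph{both} symmetry axioms are at our disposal. Combining $Ax(c=1)$ with either symmetry axiom removes the unit-rescaling freedom left open by Theorem~\ref{thm-apoi}: $Ax(c=1)$ fixes the ratio of each observer's spatial unit to the temporal one, while the symmetry axiom forces the remaining common scale to agree across observers. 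The central lemma, which I would isolate and prove first, is therefore that $SpecRel$ implies every worldview transformation $w_{ij}$ between inertial observers is a Poincar\'e transformation, possibly precomposed with a coordinatewise field automorphism that I will argue is harmless.

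For item 1, I would use $AxSelf$ to place $j$'s worldline on its own time axis, so that the events ``$j$'s clock reads $0$'' and ``$j$'s clock reads $t$'' are coordinatized by $j$ as $\bar o$ and $t\cdot\bar 1_t$. Writing $v=v_i(j)$, the images $w_{ij}(\bar o)$ and $w_{ij}(t\cdot\bar 1_t)$ lie on $wline_i(j)$, which is a line because $SpecRel_0$ implies that $j$ moves uniformly according to $i$, so their spatial distance is $v$ times their time difference. As $w_{ij}$ preserves the Minkowski interval, $time\big(w_{ij}(t\cdot\bar 1_t),w_{ij}(\bar o)\big)^2-dist\big(w_{ij}(t\cdot\bar 1_t),w_{ij}(\bar o)\big)^2=t^2$; substituting $dist=v\cdot time$ and solving gives $time\big(w_{ij}(t\cdot\bar 1_t),w_{ij}(\bar o)\big)=t/\sqrt{1-v^2}$. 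Hence $i$ assigns coordinate time $t/\sqrt{1-v^2}$ to the span over which $j$'s clock advances by $t$, i.e.\ $i$ sees $j$'s clock slowed exactly by the factor $\sqrt{1-v^2}$. Since $w_{ji}=w_{ij}^{-1}$ is again a Poincar\'e transformation carrying the same relative speed, the statement with $i$ and $j$ interchanged holds with the identical factor; this reciprocity is exactly what $AxSymTime$ records and what the phrase ``each others' clocks'' demands.

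For item 2, I would model a moving spaceship as a rod at rest along $j$'s first spatial axis, its endpoints tracing the worldlines $\{(0,0,0,\tau):\tau\in Q\}$ and $\{(L,0,0,\tau):\tau\in Q\}$ in $j$'s coordinates, with $L$ the rest length. Observer $i$ measures its length as the spatial distance between two events, one on each endpoint worldline, that are simultaneous for $i$. Pushing these worldlines forward by $w_{ij}$ and intersecting them with a single time slice of $i$ --- this is where the relativity of simultaneity enters --- produces two events whose spatial distance computes to $L\sqrt{1-v^2}$, the contraction affecting only the direction of motion while the transverse extent is unchanged. Exactly as before, applying the same argument to $w_{ji}$ yields the reciprocal statement, so both observers see each others' ships contracted by $\sqrt{1-v^2}$; this is the symmetry encoded in $AxSymDist$.

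The step I expect to be the main obstacle is the central lemma of the first paragraph, namely eliminating the residual freedom in Theorem~\ref{thm-apoi} so as to obtain \emph{exactly} Poincar\'e transformations rather than Poincar\'e transformations up to a rescaling and a coordinatewise field automorphism. Discarding the rescaling is bookkeeping with the two symmetry axioms together with $Ax(c=1)$, but one must also check that the field automorphism $\phi$ cannot distort the factor $\sqrt{1-v^2}$. This should follow because every automorphism of a Euclidean field is order preserving and commutes with $+$, $\cdot$ and with the extraction of square roots of positive elements, so it preserves every algebraic relation among a single observer's measured quantities that defines the dilation and contraction rates. Making this precise, and translating the informally quoted assertions into exact formulas in $w_{ij}$, $time$, $dist$ and $v_i(j)$, is the genuinely delicate part of the argument; the Lorentz-boost computations sketched above are then routine.
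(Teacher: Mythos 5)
Your overall strategy --- reduce to Theorem~\ref{thm-apoi}, use $Ax(c=1)$ together with the symmetry axioms to remove the residual unit freedom, then do the Lorentz-boost computation --- is sound, and your first reduction is correct: $Ax(c=1)$ gives $Ax(c_i=c_j)$, so Proposition~\ref{thm-eq} makes $AxSymTime$ available alongside $AxSymDist$. Note that the paper itself gives no in-text proof of this theorem; it points to Andr\'eka et.\ al.\ (2007, Thm.\ 11.6) for the precise formalization and direct proofs, so your sketch is an independent reconstruction along the intended lines.

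There is, however, a genuine gap exactly at the step you flag as delicate: the field automorphism. Your claim that a residual coordinatewise automorphism $\phi$ is ``harmless'' because it ``preserves every algebraic relation among a single observer's measured quantities'' does not work, because the dilation rate is a relation between \emph{two} observers' quantities. If $w_{ij}=P\circ\tilde\phi$ with $P$ a Poincar\'e transformation and $\tilde\phi$ a nontrivial coordinatewise automorphism, then $i$ sees $j$'s clock reading $t$ at $i$-coordinate-time $\phi(t)/\sqrt{1-v^2}$; a nontrivial automorphism fixes $1$ but is not $Q$-linear, so this is not even a constant rate, let alone the Lorentz factor --- a surviving automorphism would falsify the theorem rather than leave it intact. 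The correct move is to show that no such automorphism survives under $SpecRel$: apply $AxSymDist$ to pairs of events simultaneous for both observers. Their separation vectors form a linear subspace of dimension at least two, which contains vectors of every non-negative length, and the Poincar\'e part preserves those lengths; hence $AxSymDist$ forces $\phi(d)=d$ for every $d\ge 0$, i.e.\ $\phi=\mathrm{id}$, and the same bookkeeping disposes of the scalings. This strong conclusion is precisely what the paper records in Proposition~\ref{thm-eq}: once $Ax(c=1)$ holds, each symmetry axiom is \emph{equivalent} to the worldview transformations being Poincar\'e transformations. You should prove or invoke that equivalence in place of the harmlessness claim; with $w_{ij}$ exactly Poincar\'e, your computations for items 1 and 2 then go through as written.
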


See Andr\'eka et.\ al.\ (2007, Thm. 11.6, pp.~631-635) for precise formalization of these statements and direct proofs from the axioms.  

The so called twin paradox is the prediction of relativity stating that between two distinct meeting points inertial observers always measure more time than non-inertial ones do. Since it considers only inertial observers, $SpecRel$ is not strong enough to investigate the twin paradox in this form.\footnote{See Madar\'asz et.\ al.\ (2006) and Sz\'ekely (2009, Sec.\ 7) for an axiomatic investigation of the twin paradox in this form within a theory of accelerated observers.} However, we can simulate the accelerated twin by two inertial observers: a leaving one and a returning one. With this standard trick we can introduce an inertial version of the twin paradox where in the simulated twin paradox situations (cf.\ Figure \ref{fig-twp}) the stay-at-home inertial twin always measures more time than his leaving and returning inertial sisters together. 

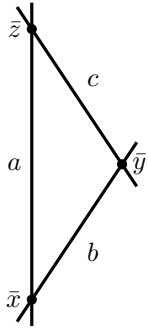
\begin{figure}[!htb]
\begin{center}
\begin{tikzpicture}[scale=0.6]
\draw[white] (-1,0) rectangle (3,7);
\draw[very thick,shorten <=-10,shorten >=-10] (0,0) node[left] {$\bar x$} to node[below right] {$b$}(2,3) node[right] {$\bar y$};
\draw[very thick,shorten <=-10,shorten >=-10] (2,3) to node[above right] {$c$} (0,6) node[left] {$\bar z$};
\draw[very thick,shorten <=-10,shorten >=-10] (0,0) to node[left]{$a$} (0,6);
\draw[fill] (0,0) circle (0.1);
\draw[fill] (2,3) circle (0.1);
\draw[fill] (0,6) circle (0.1);
\end{tikzpicture}
\end{center}
\caption{\label{fig-twp} Illustration for the formula $TwP$}
\end{figure}

$TwP$: In the situations depicted by Figure \ref{fig-twp} inertial observer $a$ measures more time between the events at $\bar x$ and $\bar y$ than inertial observers $b$ and $c$  together, i.e.,  
\begin{multline*}
(\forall a,b,c\in IOb)(\forall \bar x,\bar z \in wline_a(a))(\forall \bar y\not\in wline_a(a))\\\Big[\bar x\in wline_a(b),\bar y\in wline_a(b)\cap wline_a(c), \bar z\in wline_a(c) \\ \implies time(\bar x,\bar z) > time\big(w_{ab}(\bar x),w_{ab}(\bar y)\big)+ time\big(w_{ac}(\bar y),w_{ac}(\bar z)\big)\Big]
\end{multline*}

By Theorem~\ref{thm-twp} below, $SpecRel$ implies $TwP$ but not without the symmetry axioms.  

\begin{thm}\label{thm-twp}\leavevmode \vspace{-\baselineskip} 
\begin{enumerate}
\item $SpecRel_0 + Ax(c=1) \centernot\implies TwP$
\item $SpecRel_0 + Ax(c=1) \implies (AxSymTime \implies TwP)$
\end{enumerate}
\end{thm}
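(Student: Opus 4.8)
The plan is to prove the two items by opposite means: item~1 by exhibiting a model, and item~2 by first upgrading the worldview transformations to genuine Poincar\'e transformations and then invoking a reverse triangle inequality of Minkowski geometry. The common engine for both is the observation that $Ax(c=1)$ fixes the ratio of the space and time units of each observer but, on its own, leaves a global rescaling freedom that $AxSymTime$ is precisely designed to remove.

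For item~1, I would start from a standard model of $SpecRel$ over $\mathbb{R}$ --- for instance the one provided by Theorem~\ref{thm-cons} --- in which ordinary Minkowski geometry holds and $TwP$ is true. I then apply the unit-rescaling trick already used in the proof of Proposition~\ref{prop-ineq}, at the level of the worldview relation: I multiply \emph{all four} coordinates of each travelling observer $b$ and $c$ by one common factor $\lambda>1$, leaving the stay-at-home observer $a$ untouched. Multiplying all coordinates of an observer by the same positive number leaves the light cone, and hence the value of every $c_i$, unchanged, so $Ax(c=1)$ together with $AxPh$, $AxEv$ and $AxSelf$ still holds; consistency of the worldview transformations is automatic because they are derived from the modified $W$. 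However, the time $time\big(w_{ab}(\bar x),w_{ab}(\bar y)\big)$ recorded by $b$ (and likewise by $c$) is now inflated by $\lambda$, while $time(\bar x,\bar z)$, measured by the untouched observer $a$, is unchanged. Concretely, in the usual symmetric scenario the right-hand side of $TwP$ becomes $2\lambda T\sqrt{1-v^2}$ whereas the left-hand side stays $2T$; choosing $\lambda>1/\sqrt{1-v^2}$ reverses the inequality and falsifies $TwP$. This is exactly the configuration in which $AxSymTime$ fails, which is the point of the theorem.

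For item~2, I would first show that the hypotheses force the worldview transformations to be honest Poincar\'e transformations. Since $Ax(c=1)$ gives $c_i=1$ for every $i$, the axiom $Ax(c_i=c_j)$ holds, so Proposition~\ref{thm-eq} turns the assumed $AxSymTime$ into $AxSymDist$ as well; we are therefore in a model of $SpecRel$, and by the $c=1$ case recorded in the proof of Proposition~\ref{thm-eq} both symmetry axioms are equivalent to the statement that every $w_{ii'}$ is a Poincar\'e transformation. In the $TwP$ configuration, $AxSelf$ places $\bar x,\bar z$ on $a$'s time axis and places $w_{ab}(\bar x),w_{ab}(\bar y)$ and $w_{ac}(\bar y),w_{ac}(\bar z)$ on the time axes of $b$ and $c$; hence each of the three times in $TwP$ equals the Minkowski length $\sqrt{time^2-dist^2}$ of the corresponding segment, and because the $w$'s preserve this interval the inequality to be proved becomes
\[
\sqrt{I(\bar x,\bar z)} > \sqrt{I(\bar x,\bar y)}+\sqrt{I(\bar y,\bar z)},
\]
where $I(\bar u,\bar v)=time(\bar u,\bar v)^2-dist(\bar u,\bar v)^2$. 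This is precisely the reverse triangle inequality for the future-pointing timelike vectors $\bar y-\bar x$ and $\bar z-\bar y$, whose sum is $\bar z-\bar x$, with the hypothesis $\bar y\notin wline_a(a)$ guaranteeing non-parallelism and hence strictness.

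The last step is where the real work lies. First I must check the timelike, future-pointing hypotheses: $\bar y-\bar x$ and $\bar z-\bar y$ are timelike because they lie along the worldlines of the inertial observers $b$ and $c$, which by Theorem~\ref{thm-noftl} move slower than light, and they point the same way in time because of the time ordering of the meeting events. The main obstacle, however, is that I must establish the reverse triangle inequality over an \emph{arbitrary} Euclidean field rather than over $\mathbb{R}$, where the familiar analytic arguments are unavailable. I would do this purely algebraically, deriving it from the reverse Cauchy--Schwarz inequality $\langle \bar u,\bar v\rangle_M \ge \sqrt{\langle\bar u,\bar u\rangle_M}\,\sqrt{\langle\bar v,\bar v\rangle_M}$ for future timelike $\bar u,\bar v$, where $\langle\bar u,\bar v\rangle_M=u_4v_4-u_1v_1-u_2v_2-u_3v_3$; this reduces to a sign computation in the two-dimensional plane spanned by $\bar u$ and $\bar v$ and uses only the order and square-root axioms granted by $AxEField$. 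Expanding $\langle\bar u+\bar v,\bar u+\bar v\rangle_M=\langle\bar u,\bar u\rangle_M+2\langle\bar u,\bar v\rangle_M+\langle\bar v,\bar v\rangle_M$ and taking square roots then yields the claim, with equality exactly when $\bar u$ and $\bar v$ are proportional.
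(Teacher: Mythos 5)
Your two arguments are, in outline, the standard ones: the paper itself does not prove Theorem~\ref{thm-twp} but defers to Sz\'ekely (2010) and Sz\'ekely (2009, Sec.~4), and what you propose --- a unit-rescaling countermodel for item~1 (the very trick the paper uses for Proposition~\ref{prop-ineq}), and for item~2 the upgrade to Poincar\'e transformations via Proposition~\ref{thm-eq} followed by a reverse triangle inequality proved algebraically so that it works over an arbitrary Euclidean field --- is essentially the content of those proofs. Item~1 is correct as written: rescaling all four coordinates of $b$ and $c$ by $\lambda$ preserves $AxPh$, $AxEv$, $AxSelf$, $AxEField$ and $Ax(c=1)$, multiplies the two times on the right-hand side of $TwP$ by $\lambda$, leaves $time(\bar x,\bar z)$ alone, and any $\lambda\ge 1/\sqrt{1-v^2}$ kills the strict inequality. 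Your algebraic route to reverse Cauchy--Schwarz (evaluate $\langle \bar u+\lambda\bar v,\bar u+\lambda\bar v\rangle_M$ at $\lambda=-u_4/v_4$, get a strictly negative value when $\bar u$, $\bar v$ are non-parallel, conclude the discriminant is positive) is also sound over a Euclidean field.

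The genuine gap is in item~2, exactly at the step you pass over quickly: the claim that $\bar y-\bar x$ and $\bar z-\bar y$ point the same way in time ``because of the time ordering of the meeting events.'' No such ordering is among the hypotheses of $TwP$ as displayed: the formula requires only the incidence conditions and $\bar y\notin wline_a(a)$. The ordering is not a removable convenience, because without it the implication is false. In the Minkowski model of Theorem~\ref{thm-cons} take $\bar x=(0,0,0,0)$, $\bar z=(0,0,0,10)$, $\bar y=(1,0,0,-2)$; the lines through $\bar x,\bar y$ and through $\bar y,\bar z$ are timelike (speeds $1/2$ and $1/12$), so inertial observers $b$ and $c$ with these worldlines exist there, and all hypotheses of the displayed formula hold; yet $time(\bar x,\bar z)=10$ while the right-hand side equals $\sqrt{3}+\sqrt{143}>13$, since that model satisfies $AxSymDist$, hence $AxSymTime$, and its worldview transformations are Poincar\'e. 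So any proof of item~2 read literally must fail, and yours fails precisely where it leans on the unstated ordering. (The degenerate case $\bar x=\bar z$ is equally fatal, and your non-parallelism argument for strictness also silently uses $\bar x\neq\bar z$.) The repair is to make explicit that $TwP$ is to be read as in Figure~\ref{fig-twp}, i.e., with $\bar y$ temporally strictly between $\bar x$ and $\bar z$ (say $x_4<y_4<z_4$, up to reversal), as the formalizations in the cited references do; with that hypothesis stated, both segments are future-pointing timelike by Theorem~\ref{thm-noftl}, $\bar x\neq\bar z$ holds, and your reverse-triangle-inequality argument goes through verbatim.
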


In view of Theorem \ref{thm-twp}, whether $TwP$ also implies $AxSymTime$ or not is a natural question asked by Andr\'eka et.\ al.\ (2002, Question 4.2.17). By the intuitive insight of this paper about the role of axiom $AxSymTime$, it is not surprising that $TwP$ cannot fulfill that role and hence it does not imply $AxSymTime$.  

\begin{prop}\label{prop-nosym}
$SpecRel_0 + Ax(c=1) + TwP \centernot\implies AxSymTime$
\end{prop}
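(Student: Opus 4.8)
The plan is to refute the implication by producing a single model $\mathcal{M}$ in which $SpecRel_0$, $Ax(c=1)$ and $TwP$ all hold while $AxSymTime$ fails. The guiding idea, in the spirit of the proof of Proposition~\ref{prop-ineq}, is that rescaling an inertial observer's coordinate system \emph{isotropically} (multiplying space and time by one and the same positive factor) leaves $SpecRel_0$ and $Ax(c=1)$ intact, since it changes no speed and in particular not the speed of light, yet it disturbs the symmetry of time dilation recorded by $AxSymTime$. Indeed, by Theorem~\ref{thm-apoi} the worldview transformation between two such observers is a Lorentz transformation pre- and post-composed with these scalings, so if observers $i$ and $i'$ carry factors $\lambda_i$ and $\lambda_{i'}$ and move relative to one another, then the two sides of $AxSymTime$ for the pair $i,i'$ evaluate to $(\lambda_{i'}/\lambda_i)\,\gamma\,t$ and $(\lambda_i/\lambda_{i'})\,\gamma\,t$, where $\gamma$ is the usual Lorentzian factor; these agree exactly when $\lambda_i=\lambda_{i'}$. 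Hence any two relatively moving observers with distinct scaling factors already violate $AxSymTime$.

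The difficulty is that $TwP$ is universally quantified over all triples $a,b,c$, and a careless assignment of distinct factors destroys it: in a configuration where the observer carrying the ``odd'' factor plays the role of a traveller, or where the relative speeds are small, the twin inequality can be reversed. The remedy is to keep the observer set so small and rigid that the only admissible twin configurations are benign. Concretely, I would take $Q=\mathbb{R}$ (or any Euclidean field) and exactly three inertial observers confined to the $x$-axis: a stationary $a$, an observer $b$ moving with velocity $+V$ and an observer $c$ moving with velocity $-V$ relative to $a$, for a fixed $0<V<1$; photons are placed along every light-like line and enough auxiliary bodies are added so that every coordinate point is a common event, securing $AxEv$. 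I would then rescale $b$ and $c$ isotropically by a common factor $\mu$, leaving $a$ unscaled, so that $\lambda_a=1$ and $\lambda_b=\lambda_c=\mu$.

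The verification proceeds in four steps. First, isotropic scaling preserves $AxPh$, $AxEField$, $AxEv$, $AxSelf$ and $Ax(c=1)$ (the self-worldlines stay on the time axis because $\mu\cdot 0=0$, and no speed is altered), so $\mathcal{M}\models SpecRel_0+Ax(c=1)$. Second, I enumerate the six role assignments in $TwP$: since $b$ and $c$ move in opposite spatial directions in $a$'s frame, relative to $b$ both $a$ and $c$ move in the same spatial direction, and relative to $c$ both $a$ and $b$ do; consequently no out-and-back triangle closes with $b$ or $c$ as stay-at-home observer, and the only admissible configurations are the two mirror images with $a$ stationary and $\{b,c\}$ the travellers. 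Third, for such a configuration one computes, using $AxSelf$ and the scalings, that the left-hand side of $TwP$ is the unscaled proper time $\tau_a$ of $a$ while the right-hand side is $\mu(\tau_b+\tau_c)$; since the scale-free twin paradox gives $\tau_a=\gamma(\tau_b+\tau_c)$ with $\gamma=1/\sqrt{1-V^2}>1$, the statement $TwP$ reduces to the single inequality $\mu<\gamma$. Fourth, $a$ and $b$ are relatively moving with $\lambda_a=1\neq\mu=\lambda_b$, so $AxSymTime$ fails between them. Choosing any $\mu$ in the nonempty interval $(1,\gamma)$ thus gives $\mathcal{M}\models SpecRel_0+Ax(c=1)+TwP$ while $AxSymTime$ fails in $\mathcal{M}$, which is the desired non-implication.

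The main obstacle, and the real content of the argument, is the second step: arranging the observer set so that the universally quantified $TwP$ cannot be falsified even though the time units are deliberately mismatched. The opposite-velocity choice is exactly what forces $a$ to be the unique possible stay-at-home observer, so that the single surviving instance of $TwP$ is the favourable inequality $\mu<\gamma$, and every configuration in which a mismatched observer could act as a traveller is geometrically excluded. Everything else is the routine bookkeeping already employed for Proposition~\ref{prop-ineq}, together with the structural reduction furnished by Theorem~\ref{thm-apoi}.
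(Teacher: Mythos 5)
Your construction is correct in substance, and it is the same kind of argument as the one the paper relies on: the paper gives no in-text proof of Proposition~\ref{prop-nosym}, deferring to Sz\'ekely (2010) and Sz\'ekely (2009, Sec.~4), where the non-implication is likewise obtained by rescaling observers' coordinate systems isotropically in a Minkowski model while keeping the set of realizable twin configurations under enough control that the strict inequality in $TwP$ survives. Your reduction of $TwP$ in the three-observer model to the single inequality $\mu<\gamma$, and the observation that the strictness of the twin inequality leaves room for a genuine mismatch of time units, is exactly the ``intuitive insight'' the paper invokes just before stating the proposition. Two small bookkeeping points: you should fix concrete worldlines (e.g.\ $b$ through the origin and $c$ through $(0,0,0,2T)$, meeting at $(VT,0,0,T)$) so that the configuration you compute with actually exists; and note that if $b$ and $c$ both crossed the time axis at the same event, $TwP$ would hold vacuously, which also proves the proposition, only less informatively.

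One point must be made explicit, because as stated your step 2 is not true of the formula displayed in this paper. The displayed $TwP$ does not require the time coordinate of $\bar y$ to lie between those of $\bar x$ and $\bar z$ (nor even $\bar x\neq\bar z$). Under that literal reading, triangles do ``close'' with $b$ as stay-at-home in your model: take $\bar x$ at the event where $b$ meets $a$, $\bar y$ at the event where $a$ meets $c$, and $\bar z$ at the event where $b$ meets $c$; every incidence clause of the antecedent holds, yet the consequent fails, since the traveller path overshoots $\bar z$ in time (the degenerate instance $b=c$, $\bar x=\bar z$ fails too). However, these same instances falsify the literal $TwP$ in every model in which two observers' worldlines cross, including models of full $SpecRel$, contradicting Theorem~\ref{thm-twp}(2); so the formula must be read with the chronological betweenness condition that is present in the cited formalizations. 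Under that intended reading your same-direction argument is valid: a traveller leaving the stay-at-home observer and a traveller returning to him must have velocities of opposite sign in his frame, so only $a$ can be the stay-at-home observer, and your proof goes through. It would strengthen the write-up to state this ordering assumption on $TwP$ explicitly rather than leave it implicit in the phrase ``out-and-back.''
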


See Sz\'ekely (2010) and Sz\'ekely (2009 Sec.\ 4) for proof of Theorem~\ref{thm-twp} and Proposition~\ref{prop-nosym} and a detailed investigation of $TwP$, e.g., a geometrical characterization of the models of $SpecRel_0 + Ax(c=1)$ where $TwP$ holds. 

\section{SYMMETRY AXIOMS IN GENERAL RELATIVITY}

Both in the special and the general theories of relativity, this kind of axiomatic approach to symmetries based on comparing the units of measurement of different observers is original to the Andr\'eka--N\'emeti team. To discuss the general relativistic counterparts of symmetry axioms $AxSymTime$ and $AxSymDist$, let us  see first how general relativity emerges in the axiomatic framework of the team.  

The transition from special relativity to general relativity in terms of axioms basically goes by assuming the localized versions of the same axioms, see, e.g., Andr\'eka et.\ al.\ (2012a). The corresponding axiom system is called $GenRel$. Theorem~\ref{thm-comp} below says that the axiom system $GenRel$ captures general relativity well. For more details, e.g., precise statement, proof, and refinements, see, e.g., Andr\'eka et.\ al.\ (2013).  

\begin{thm}[Completeness Theorem]\label{thm-comp}
$GenRel$ is complete with respect to the standard models of general relativity, i.e., to Lorentzian manifolds. 
\end{thm}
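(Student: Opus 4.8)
The statement is a two-directional representation theorem: ``complete with respect to the standard models'' here means that the models of $GenRel$ are, up to isomorphism, exactly the Lorentzian manifolds. So the plan is to prove separately that (i) every Lorentzian manifold can be coordinatized to yield a model of $GenRel$, and (ii) every model of $GenRel$ carries a canonically reconstructed Lorentzian manifold, and then that the two passages are mutually inverse. The dictionary underlying both directions is the one already used in the special-relativistic case: (inertial-like) observers correspond to local coordinate charts, events to points of the manifold $M$, worldview transformations $w_{oo'}$ to the transition maps of the atlas, and light signals to the null cones.

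For direction (i), from a Lorentzian manifold $(M,g)$ I would take as observers a rich enough family of smooth charts---for instance coordinates built along timelike curves and rescaled so that each observer's time and space units are read off from $g$---define $W$ so that $ev_o$ inverts the chart, and let the photons be the null geodesics. Verifying the localized axioms of $GenRel$ then reduces to the infinitesimal statement that in each chart the metric at a point, written in the tangent space, is Minkowskian after a linear change of basis: this yields the localized $AxPh$, whereas $AxSelf$, $AxEv$ and the ``enough observers'' condition are immediate from the choice of charts, and the localized symmetry axioms hold precisely because the unit-fixing rescaling makes neighbouring observers agree on infinitesimal distances and clock rates.

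Direction (ii) is the substantial one. I would build the smooth atlas from the model itself: each observer $o$ supplies the chart $\bar x\mapsto ev_o(\bar x)$ on its coordinate domain, and the localization built into $GenRel$---which requires each $w_{oo'}$ to be differentiable with a Lorentz-type linear approximation at every event---forces the transition maps to be smooth, so the charts assemble into a smooth manifold on the set of events (the ``enough observers'' axiom guaranteeing that the charts cover everything). The metric is then recovered in two stages, and this is where the symmetry axioms earn their keep: the localized $AxPh$ fixes, at each event, the null directions in the tangent space, which determines $g$ only up to a conformal factor; the localized counterparts of $AxSymDist$ and $AxSymTime$ together with $Ax(c=1)$ then pin down that factor by fixing the observers' infinitesimal units, promoting the conformal structure to a genuine Lorentzian $g$. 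One finally checks that $g$ is smooth---inheriting smoothness from the transition maps---and of Lorentzian signature.

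The main obstacle I expect is exactly this metric reconstruction and its smoothness in direction (ii): first capturing ``differentiability'' by a first-order scheme strong enough that the transition maps are genuinely smooth, and then proving that the pointwise Minkowski products reconstructed chart-by-chart patch together into a single smooth tensor field rather than a mere pointwise family of inner products. Tellingly, dropping the symmetry axioms removes precisely the data that fixes the conformal factor, which is why their omission can leave only a conformal structure---exactly the content of the paper's concluding conjecture. A secondary subtlety is the coordinate field: a model of $GenRel$ lives over an arbitrary Euclidean field $Q$, whereas Lorentzian manifolds are classically real, so the clean statement requires either taking $Q=\mathbb{R}$ or interpreting ``Lorentzian manifold'' over $Q$ and invoking enough completeness of the field. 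Reconciling these is where the precise formulation and full argument of Andr\'eka et.\ al.\ (2013) must be invoked.
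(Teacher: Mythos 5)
You should first be aware that the paper itself does not prove Theorem~\ref{thm-comp} at all: it states the theorem informally and defers ``precise statement, proof, and refinements'' entirely to Andr\'eka et.\ al.\ (2013). So the only meaningful comparison is with that cited work, and your outline does reproduce its broad architecture: a two-directional correspondence in which observers supply charts, events supply the points of a manifold, the localized $AxPh$ supplies the null-cone (conformal) structure, and the localized symmetry axiom upgrades it to a genuine metric. Two details of your dictionary diverge from the paper's actual setup, though: in $GenRel$ the localized version of $Ax(c=1)$ is folded into $AxPh^-$ and only $AxSymTime^-$ is assumed (the paper explicitly says the localization of $AxSymTime$ is the simpler one and is the one used), so the localized $AxSymDist$ you appeal to is not part of the axiom system; and $GenRel$'s observers are arbitrary, possibly accelerated ones with local worldviews, not ``inertial-like'' ones.

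The genuine gap is that your proposal is a plan rather than a proof: each of the steps you yourself single out as ``the main obstacle''---expressing enough differentiability in first-order logic so that the transition maps assemble the events into a manifold, showing that the chart-by-chart Minkowski products patch into a single Lorentzian tensor field, and dealing with the fact that $Q$ need not be $\mathbb{R}$---is precisely the technical content of the cited proof, and your text resolves none of them, ending instead by invoking ``the precise formulation and full argument of Andr\'eka et.\ al.\ (2013)''. In particular, the field issue cannot be settled the way you first suggest (taking $Q=\mathbb{R}$): no first-order axiom system can force the quantity field to be the reals (by the L\"owenheim--Skolem theorem it has models over other fields of every infinite cardinality), so the cited work instead generalizes the notion of Lorentzian manifold to manifolds over real-closed fields and proves completeness with respect to that class. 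Since the paper under review offers no proof either, your sketch is a reasonable reconstruction of the intended strategy, but it cannot stand on its own as a proof of the theorem.
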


How are the axioms of $GenRel$ related to the axioms of $SpecRel$?

For every axiom $Ax$ of $SpecRel$, there is an axiom $Ax^-$ in $GenRel$, such that $Ax^-$ captures the same idea locally for arbitrary observers as $Ax$ does globally for inertial ones. The only exception is axiom $\mathit{AxDiff}$ of $GenRel$ because it localizes not an axiom but the theorem of $SpecRel$ stating that the worldview transformations between inertial observers are affine transformations.  

In $GenRel$, the localized version of axiom $Ax(c=1)$ is included in $AxPh^-$. Both $AxSymDist$ and $AxSymTime$ can be localized, but the localization of $AxSymTime$ is simpler, see Sz\'ekely (2009, pp.~96-97). $AxSymTime^-$ intuitively says the following. 

\underline{$AxSymTime^-$}: Any two observers see each others' clocks behaving in the same way at an event of meeting.

The standard approach to symmetries in general relativity is based on local diffeomorphisms preserving some geometrical notions, such as the metric or geodesics, see, e.g., Hall (2004, Sec.\ 10). Even though it is a natural open question how these standard concepts of symmetries can be captured within the first-order logic framework of the Andr\'eka--N\'emeti team, here we concentrate on another interesting question, namely on what role may the symmetry axioms play in $GenRel$.

In $SpecRel$ the only flexibility we get by leaving out the symmetry axioms is that different inertial observers may use different units of measurements, see Theorem~\ref{thm-apoi}. Therefore, it is natural to conjecture that in $GenRel$ the localized version of the symmetry axioms will have similar roles, i.e., without them the rest of the axioms will still capture the standard models of general relativity up to conformal equivalence (i.e., up to changing of units of measurements locally).\footnote{Since $AxPh^-$ contains the localized version of $Ax(c=1)$, the units for measuring time and space remain intertwined even if we omit $AxSymTime^-$. The nonstandard unit changing transformations induced by field automorphisms in Theorem~\ref{thm-apoi} are conjectured to be ruled out by $\mathit{AxDiff}$.}

\begin{conj}\label{conj}
If we leave out $AxSymTime^-$ from $GenRel$ it will still capture Lorentzian manifolds up to conformal equivalence. 
\end{conj}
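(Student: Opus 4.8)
The plan is to mirror the proof of the Completeness Theorem (Theorem~\ref{thm-comp}) while tracking exactly which geometric information is lost when $AxSymTime^-$ is dropped. Writing $GenRel^-$ for $GenRel$ without $AxSymTime^-$, I would first fix a precise reading of ``captures up to conformal equivalence'': a correspondence under which each model of $GenRel^-$ determines a Lorentzian manifold $(M,g)$ well-defined only up to a conformal rescaling $g\mapsto\Omega^2 g$ with $\Omega$ a positive smooth function, and conversely every conformal class of Lorentzian manifolds arises this way. The guiding heuristic is the special-relativistic Theorem~\ref{thm-apoi}: there the only freedom left by dropping the symmetry axioms is a change of units, and under the retained localized $Ax(c=1)$ (which lives inside $AxPh^-$) the space and time units must scale together, so the residual freedom is a single positive scalar per observer. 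Localizing this scalar to each event is exactly a conformal factor.

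For the forward direction the key observation is that $AxPh^-$, containing the localized $Ax(c=1)$, already fixes the null cone at every event for every observer, and $AxDiff$ forces this cone field to vary smoothly on $M$. By the classical fact that a smooth field of Lorentzian null cones on a manifold of dimension at least three (here the four-dimensional $M$) determines the metric exactly up to a pointwise positive factor, such a cone field is precisely a conformal Lorentzian structure. I would therefore show that every axiom of $GenRel^-$ can be evaluated on, and is determined by, this cone field together with the smooth observer charts, whereas $AxSymTime^-$ is exactly the extra condition that pins the pointwise scale: it forces any two observers meeting at an event to agree on clock rates, which in metric terms normalizes the conformal factor. Concretely, localizing Theorem~\ref{thm-apoi} I would compute the Jacobian of a worldview transformation at a meeting event; preserving the common null cone places it in the Lorentz similarity group $CO(1,3)$ (a Lorentz transformation times a positive scalar), and $AxSymTime^-$ is what would force the scalar to be $1$. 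Without it the scalars survive and assemble into the conformal factor $\Omega$.

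For the converse I would start from any Lorentzian manifold, build its $GenRel$ model via Theorem~\ref{thm-comp} (a fortiori a model of the weaker $GenRel^-$), and then check that replacing $g$ by $\Omega^2 g$ still yields a model of $GenRel^-$. This reduces to verifying that each retained axiom is conformally invariant: $AxPh^-$ is invariant because conformal changes preserve null cones; the localized versions of $AxEv$ and $AxSelf$ are incidence statements about events and an observer's own worldline and are scale-free; $AxDiff$ is a smoothness condition, hence invariant; and, by the local analogues of Propositions~\ref{thm-eq} and~\ref{prop-c}, in the presence of the localized $Ax(c=1)$ the space-symmetry condition is equivalent to $AxSymTime^-$, so dropping $AxSymTime^-$ removes all symmetry content and no other retained axiom obstructs the rescaling. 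The remaining bookkeeping is to show the correspondence is exactly up to conformal equivalence and no finer, i.e.\ that models arising from $(M,g)$ and $(M,\tilde g)$ coincide modulo local unit change iff $\tilde g=\Omega^2 g$, so that distinct conformal classes stay distinguishable.

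The hard part will be turning the intuitive claim of the accompanying footnote---that $AxDiff$ rules out the field-automorphism freedom of Theorem~\ref{thm-apoi}---into a proof over an arbitrary Euclidean field $Q$. On $\mathbb{R}$ the identity is the only differentiable automorphism, but for a general real-closed $Q$ one must argue that requiring the worldview transformations to be differentiable in the sense of $AxDiff$ forces the automorphism component of each Jacobian to be trivial, so that the null cones genuinely determine the conformal class and not merely a ``twisted'' version of it. I expect this, together with the global coherence step---verifying that the per-observer, per-event scalars satisfy a cocycle condition and patch into a single smooth $\Omega$ on all of $M$ rather than only locally---to be the crux; the null-cone/conformal correspondence and the conformal invariance of the individual retained axioms should be comparatively routine once the machinery behind Theorem~\ref{thm-comp} is in hand.
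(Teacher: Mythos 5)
There is a fundamental mismatch here: the statement you are trying to prove is not a theorem of the paper but an open conjecture. The paper explicitly offers no proof --- it says only that the manifold of events and the tangent space should be definable as in Andr\'eka et.\ al.\ (2013), that ``probably there will be some definable geometric object in the tangent space that can capture the conformal equivalence classes,'' and that checking the conjecture ``requires further investigations.'' So there is no paper proof to compare your proposal against, and your proposal should be judged on whether it actually closes the question. It does not: it is a research plan whose decisive steps are deferred rather than carried out. You yourself flag the two cruxes --- (a) showing that $\mathit{AxDiff}$ really eliminates the field-automorphism freedom of Theorem~\ref{thm-apoi} over an arbitrary Euclidean (not just real-closed, and certainly not just $\mathbb{R}$) quantity field, and (b) showing that the per-observer, per-event scale factors cohere into a single well-defined conformal structure on $M$ --- and you prove neither. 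Note that the paper itself only \emph{conjectures} (a), in the footnote preceding the conjecture, so invoking it is circular from the standpoint of a proof.

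Beyond those admitted gaps, two further points need care. First, the ``classical fact'' that a null-cone field determines a Lorentzian metric up to a pointwise positive factor is classical for smooth metrics over $\mathbb{R}$; the models of $GenRel$ minus $AxSymTime^-$ live over arbitrary Euclidean fields, and both the cone-determines-conformal-class argument and the smoothness/definability apparatus of Andr\'eka et.\ al.\ (2013) must be re-derived in that setting --- this is precisely where a purported proof could fail, since non-Archimedean fields admit pathologies (e.g.\ discontinuous behavior compatible with the order topology) that the real case excludes. Second, your converse direction quietly assumes that the completeness machinery of Theorem~\ref{thm-comp} survives the removal of $AxSymTime^-$ in the forward direction as well; but that machinery uses the metric, not merely the conformal class, so one must check which constructions (geodesics, the exponential-map-style identifications, etc.) are conformally invariant --- timelike geodesics are notoriously \emph{not} conformally invariant, only null geodesics are, up to reparametrization. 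Your plan is a sensible and well-aligned elaboration of the paper's own heuristic, but as it stands the statement remains exactly what the paper says it is: a conjecture.
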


That is, even though no unique Lorentzian metric can be defined without axiom $AxSymTime^-$, the manifold of events and the tangent space is definable the same way  as in Andr\'eka et.\ al.\ (2013) and  probably there will be some definable geometric object in the tangent space that can capture the conformal equivalence classes of Lorentzian manifolds.

\section{CONCLUDING REMARKS}

We have seen that symmetry axioms $AxSymTime$ and $AxSymDist$ ensure that different inertial observers use the same units of measurement to measure time and space, respectively. This insight it helps to understand why these symmetry axioms are needed to get the exact Lorentzian contraction factors and why $SpecRel_0$ implies time dilation and length contraction only qualitatively. Also after understanding the exact roles of these symmetry axioms, it becomes clearer why they are needed to prove the twin paradox ($TwP$), and why $TwP$ is not strong enough to imply them. 

It is natural to conjecture that the localized version of these symmetry axioms will have a similar role in general relativity, but to check this conjecture requires further investigations. 

\section{REFERENCES}

\noindent
Andr{\'e}ka, H., Madar{\'a}sz, J.~X., and  N{\'e}meti, I., with contributions from {A}.~{A}ndai, {G}.~{S}{\'a}gi, {I}.~{S}ain, and {C}s.~{T}{\H o}ke.\ (2002)   
\textit{On the logical structure of relativity theories}. E-book, {A}lfr{\'e}d {R}{\'e}nyi {I}nstitute  of {M}athematics, 1312pp. 

\noindent
Andr{\'e}ka, H., Madar{\'a}sz, J.~X., and  N{\'e}meti, I.\ (2006).   Logical axiomatizations of space-time. {S}amples from the literature.
 In A.~Pr{\'e}kopa and E.~Moln{\'a}r, editors, \textit{Non-{E}uclidean geometries}, pages 155--185. Springer-Verlag, New York.

\noindent
Andr{\'e}ka H.,  Madar{\'a}sz, J.~X., and N{\'e}meti, I.\ (2007),  Logic of space-time and relativity theory,   In: {M}.~{A}iello, {I}.~{P}ratt-{H}artmann, and {J}.~van {B}enthem, eds.,  \textit{Handbook of spatial logics}, pp.~607--711. {S}pringer-Verlag, {D}ordrecht.

\noindent
Andr{\'e}ka, H., Madar\'asz, J.~X., N{\'e}meti, I., and Sz\'ekely, G.\ (2012a) 
A logic road from special relativity to general relativity, \textit{Synthese}, 186, no.~3,633-469.  

\noindent
Andr\'eka, H.,  Madar\'asz, J.X., N\'emeti I., and Sz\'ekely, G.\ (2012b)
What are the numbers in which spacetime? arXiv:1204.1350. 

\noindent
Andr\'eka, H.,  Madar\'asz, J.X., N\'emeti I., Sz\'ekely, G.\ (2013)  An Axiom System for General Relativity Complete with respect to Lorentzian Manifolds.   	arXiv:1310.1475. 

\noindent
Andr\' eka, H., and N\' emeti, I.\ (2014), Comparing theories: the dynamics
of changing vocabulary. In: \textit{Johan V. A. K. van Benthem on logical and informational dynamics.} A. Baltag and S. Smets, eds., Springer Series Outstanding
contributions to logic Vol 5, Springer Verlag,  143-172.

\noindent
Ax, J.\ (1978) The elementary foundations of spacetime.
\textit{Foundations of Physics}, 8, no.\ 7-8,507-546.

\noindent
Benda, T.\ (2008) A formal construction of the spacetime manifold.
\textit{Journal of Philosophical Logic}, 37, no.\ 5, 441-478.

\noindent
Benda, T.\ (2015) An axiomatic foundation of relativistic spacetime.
\textit{Synthese}, 192, no.\ 7, 2009-2024.

\noindent
Barrett, T. W., and Halvorson, H.\ (2015), Morita equivalence. 
arXiv:1506.04675. 

\noindent
Einstein, A.\ (1905) On the Electrodynamics of Moving Bodies. \textit{Annalen der Physik} 17, 891-921.

\noindent
Friend, M.\ (2015) On the epistemological significance of the hungarian project. \textit{Synthese}, 192, no.\ 7,  2035-2051.

\noindent
Goldblatt, R.\ (1987) \textit{Orthogonality and spacetime geometry}. Springer-Verlag, ix + 190pp.

\noindent
Govindarajalulu, N.\ S., Bringsjord S., and Taylor J.\ (2015)
Proof verification and proof discovery for relativity.
\textit{Synthese}, 192, no.\ 7, 2077-2094.

\noindent
Guts, A.\ K.\ (1982) 
The axiomatic theory of relativity.
\textit{Russian Mathematical Surveys}, 37, no.\ 2, 41-89.

\noindent
Hall, G.\ S.\ (2004), \textit{Symmetries and curvature structure in general relativity}. World Scientific Lecture Notes in Physics: Volume 46, World Scientific, x + 430pp.

\noindent
Madar\'asz, X.~J.\ (2002), \textit{Logic and Relativity (in the light of definability theory)}.  [Ph.D. Dissertation], {B}udapest: {E}{\"o}tv{\"o}s {L}or{\'a}nd University, 444 pp. 

\noindent
Madar\'asz, J.X., N\'emeti I., and Sz\'ekely, G.\ (2006) Twin Paradox and the logical foundation of relativity theory. \textit{Foundations of Physics} 36, no. 5, 681-714. 

\noindent
Mundy, B.\ (1986) Optical axiomatization of Minkowski space-time geometry.
\textit{Philosophy of Science}, 53, no.\ 1, 1-30.

\noindent
Pambuccian, V.\ (2007) 
Alexandrov-Zeeman type theorems expressed in terms of definability.
\textit{Aequationes Mathematicae}, 74, no.\ 3, 249-261.

\noindent
Rosenstock, S., Barrett, T. W., and Weatherall, J. O.\ (2015). On Einstein
algebras and relativistic spacetimes. arXiv:1506.00124.

\noindent
Schutz,  J.\ W.\ (1981) 
An axiomatic system for Minkowski space-time.
\textit{Journal of Mathematical Physics}, 22, no.\ 2, 293-302.

\noindent
Stannett, M., and N\'emeti I., (2014)
Using Isabelle/HOL to Verify First-Order Relativity Theory.
\textit{Journal of Automated Reasoning}, 52, no.\ 4, 361-378.  

\noindent
Szab\'o, L.\ E.\ (2009) Empirical Foundation of Space and Time.  In: M. Su\'arez, M. Dorato and M. R\'dei, eds., EPSA07: \textit{Launch of the European Philosophy of Science Association}, Springer, 251-266. 

\noindent
Sz\'ekely, G.\ (2009), \textit{First-Order Logic Investigation of Relativity Theory with an Emphasis on Accelerated Observers}.  [Ph.D. Dissertation] {B}udapest: {E}{\"o}tv{\"o}s {L}or{\'a}nd University, 152 pp.

\noindent
Sz\'ekely G.\ (2010) A Geometrical Characterization of the Twin Paradox and its Variants. \textit{Studia Logica} 95, no.~1-2, 161-182. 

\noindent
Weatherall, J. O.\ (2014), Are Newtonian gravitation and geometrized Newtonian gravitation theoretically equivalent?  arXiv:1411.5757.

\end{document}